\newif\ifnotes
\newcommand{\alon}[1]{$\ll$\textsf{\color{green} Alon: { #1}}$\gg$}
\newcommand{\omer}[1]{$\ll$\textsf{\color{red} Omer: { #1}}$\gg$}
\newcommand{\nir}[1]{$\ll$\textsf{\color{orange} Nir: { #1}}$\gg$}
\newcommand{\ran}[1]{$\ll$\textsf{\color{blue} Ran: { #1}}$\gg$}
\newcommand{\yael}[1]{$\ll$\textsf{\color{purple} Yael: { #1}}$\gg$}
\newcommand{\henry}[1]{$\ll$\textsf{\color{brown} Henry: { #1}}$\gg$}
\newcommand{\shafi}[1]{$\ll$\textsf{\color{blue} Shafi: { #1}}$\gg$}
\newcommand{\alon}[1]{}
\newcommand{\omer}[1]{}
\newcommand{\nir}[1]{}
\newcommand{\ran}[1]{}
\newcommand{\yael}[1]{}
\newcommand{\henry}[1]{}
\newcommand{\shafi}[1]{}
\renewcommand{\cal}{\mathcal}
\renewcommand{\paragraph}[1]{\vspace{1.5mm}\noindent \textbf{#1}}
\newcommand{\set}[1]{\left\{#1\right\}}
\newenvironment{boxfig}[2]{\begin{figure}[#1]\fbox{\begin{minipage}{\linewidth}
                        \vspace{0.2em}
                        \makebox[0.025\linewidth]{}
                        \begin{minipage}{0.95\linewidth}
            {{
                        #2 }}
                        \end{minipage}
                        \vspace{0.2em}
                        \end{minipage}}}{\end{figure}}
\newcommand{\pprotocol}[4]{
\begin{boxfig}{h!}{
\begin{center}
{\bf #1}
\end{center}
    #4
\vspace{0.2em} } \caption{\label{#3} #2}
\end{boxfig}
}
\newcommand{\protocol}[4]{
\pprotocol{#1}{#2}{#3}{#4} }
\newcommand{\poly}{\mathsf{poly}}
\newcommand{\negl}{\mathsf{negl}}
\newcommand{\aux}{\mathsf{aux}}
\newcommand{\Sim}{\mathsf{S}}
\newcommand{\Enc}{\mathsf{Enc}}
\newcommand{\Dec}{\mathsf{Dec}}
\newcommand{\Adv}{\mathsf{A}}
\newcommand{\Ver}{\mathsf{V}}
\newcommand{\ZK}{\mathsf{zk}}
\newcommand{\G}{\mathsf{G}}
\newcommand{\D}{\mathsf{D}}
\newcommand{\punc}{\mathsf{Punc}}
\newcommand{\rtime}{\mathsf{time}}
\newcommand{\pST}{\; \middle\vert \;}
\newcommand{\tC}{\tilde{C}}
\newcommand{\Z}{\mathcal{Z}}
\newcommand{\N}{\mathbb{N}}
\newcommand{\PPT}{\textrm{PPT}}
\renewcommand{\O}{\mathcal{O}}
\newcommand{\iO}{i\mathcal{O}}
\newcommand{\NP}{\mathsf{NP}}
\newtheorem{definition}{Definition}[section]
\newtheorem{lemma}[definition]{Lemma}
\newtheorem{corollary}[definition]{Corollary}
\newtheorem{theorem}[definition]{Theorem}
\newtheorem{claim}[definition]{Claim}
\theoremstyle{remark}
\newtheorem{remark}[definition]{Remark}
\begin{document}

\title{The Impossibility of Obfuscation with\\ Auxiliary Input or a Universal Simulator}

\author{ \and Nir Bitansky\thanks{Tel Aviv University, \texttt{nirbitan@tau.ac.il}.  Supported
by an IBM Ph.D. Fellowship, and the Check Point Institute for Information
Security.} \and %
Ran Canetti\thanks{Boston University and Tel Aviv University,
\texttt{canetti@bu.edu}.  Supported by the Check Point Institute for
Information Security, an NSF EAGER grant, and an NSF Algorithmic Foundations
grant 1218461.} \and %
Henry Cohn\thanks{Microsoft Research, One Memorial Drive,
Cambridge, MA 02142, \texttt{cohn@microsoft.com}.} \and %
Shafi Goldwasser\thanks{MIT and the Weizmann Institute of Science,
\texttt{shafi@theory.csail.mit.edu}.} \and \and %
Yael Tauman
Kalai\thanks{Microsoft Research, One Memorial Drive, Cambridge,
MA 02142, \texttt{yael@microsoft.com}.} \and %
Omer Paneth\thanks{Boston University, \texttt{omer@bu.edu}. Supported by the
Simons award for graduate students in theoretical computer
science and an NSF Algorithmic foundations grant 1218461.} \and %
Alon Rosen\thanks{Efi Arazi School of Computer Science, IDC Herzliya,
Israel, \texttt{alon.rosen@idc.ac.il}.  Supported by ISF grant no.\ 1255/12
and by the ERC under the EU's Seventh Framework Programme (FP/2007-2013) ERC
Grant Agreement n.\ 307952.} }

\date{February 9, 2014}

\maketitle

\thispagestyle{empty}

\begin{abstract}
In this paper we show that the existence of general indistinguishability
obfuscators conjectured in a few recent works implies, somewhat
counterintuitively, strong impossibility results for virtual black box
obfuscation. In particular, we show that indistinguishability obfuscation for all circuits implies:
\begin{itemize}
\item The impossibility of average-case virtual black box obfuscation with auxiliary
    input for any circuit family with super-polynomial pseudo-entropy.
    Such circuit families include all pseudo-random function families,
    and all families of encryption algorithms and randomized digital signatures that
    generate their required coin flips pseudo-randomly. Impossibility holds even when the auxiliary input depends only
    on the public circuit family, and not the specific circuit in the family being obfuscated.

\item The impossibility of average-case virtual black box obfuscation with a universal
    simulator (with or without any auxiliary input) for any circuit
    family with super-polynomial pseudo-entropy.
\end{itemize}
These bounds significantly strengthen the impossibility results of Goldwasser and Kalai (STOC 2005).
\end{abstract}

\newpage

\setcounter{page}{1}

\section{Introduction}\label{sec:intro}

The study of \emph{program obfuscation}---a method that transforms a program
(say, described as a Boolean circuit) into a form that is executable, but
otherwise completely unintelligible---has been a longstanding research
direction in cryptography. It was formalized by Barak
et~al.~\cite{BGIRSVY-conf}, who formulated a number of security notions for
this task. The strongest and most applicable of these notions is
\emph{virtual black box (VBB) obfuscation}, which requires that any adversary
trying to learn information from an obfuscated program cannot do better than
a simulator that is given only black-box access to the program. Barak
et~al.~constructed contrived function families that cannot be VBB obfuscated,
thus ruling out a universal obfuscator, but they left open the possibility
that large classes of programs might still be obfuscated. Subsequently, VBB
obfuscators were produced only for a number of restricted (and mostly simple)
classes of programs \cite{Canetti97,CD08,CanettiRV10,BR13b,BarakBCKPS13}. To date,
the classification of which programs can or cannot be VBB obfuscated is still
not well understood.

In contrast, recent progress for more relaxed notions of obfuscation suggests
a much more
positive picture: Garg et al.~\cite{GGHRSW13} proposed a candidate
construction for \emph{indistinguishability obfuscation} for \emph{all}
circuits.  This notion requires only that it is hard to distinguish an
obfuscation of $C_0$ from an obfuscation of $C_1$, where $C_0$ and $C_1$ are
circuits of the same size that compute the same function \cite{BGIRSVY-conf}.
Indeed, unlike the case of VBB obfuscation, there are no known impossibility
theorems for indistinguishability obfuscation. Furthermore, the Garg et al.\
construction and variants thereof were shown to satisfy the VBB guarantee in
ideal algebraic oracle models \cite{CanettiV13,BR13,BGKPS13}, although these
results have not proved useful so far in achieving VBB obfuscation in the
standard model of computation.

Although indistinguishability obfuscation might initially sound arcane, it is
surprisingly powerful.  For example, it amounts to \emph{best possible}
obfuscation \cite{GR07}, in the sense that anything that can be hidden by
some obfuscator will be hidden by every indistinguishability obfuscator.
Subsequent to \cite{GGHRSW13}, a flood of results have appeared showing that
indistinguishability obfuscation suffices for many applications, such as the
construction of public-key encryption from private-key encryption, the
existence of deniable encryption, the existence of multi-input functional
encryption, and more~\cite{SahaiW13,GGHRSW13,HSW13,GGJS13}.

Still, for many program classes the meaningfulness and applicability of
indistinguishability obfuscation is unclear. Thus, understanding which
classes of programs are VBB obfuscatable remains of central importance.
Aiming towards such a characterization, Goldwasser and Kalai \cite{GK05}
proved strong limitations on VBB obfuscation for a broad class of
\emph{pseudo-entropic programs}, including many cryptographic functions, such
as pseudo-random functions and certain natural instances of encryption and
signatures. They showed the impossibility of a form of VBB security with
respect to adversaries that have some a priori {\em auxiliary information}.
When the auxiliary information depends on the actual obfuscated program, they
showed that no class of pseudo-entropic functions can be obfuscated, assuming
VBB obfuscation for a simple class of \emph{point-filter functions}. For
auxiliary information that depends only on the class of programs to be
obfuscated, they gave an unconditional result, but only for a restricted
class of programs (those that evaluate \emph{NP-filter functions}).

\paragraph{This work in a nutshell.}
We strengthen the known impossibility results for VBB obfuscation with
auxiliary input, and we suggest a different, compelling interpretation of
auxiliary-input obfuscation. In a somewhat strange twist, our negative
results on VBB obfuscation are based on the existence of indistinguishability
obfuscation, which is typically viewed positively. Specifically:
\begin{itemize}
\item We weaken the conditions for the impossibility of \emph{dependent}
    auxiliary-input VBB obfuscation to \emph{witness encryption}, which in turn
    follows from \emph{indistinguishability obfuscation}.
\item We extend the impossibility of \emph{independent} auxiliary-input
    VBB obfuscation to \emph{all} pseudo-entropic functions, assuming
    \emph{indistinguishability obfuscation}.
\item We observe that  auxiliary-input  VBB obfuscation is equivalent to a very
    natural formulation of VBB obfuscation with universal simulation. This
    equivalence provides a clear conceptual argument for the significance
    of our extended impossibility results.
\end{itemize}

\medskip\noindent 
In the rest of the introduction, we introduce the notion of universal
simulation and further discuss the notion of auxiliary-input VBB obfuscation. Then, we
provide an overview of the results and sketch the proof techniques involved.

\paragraph{Universal simulators.}
The definition of VBB obfuscation as proposed by Barak et al. requires that
for each $\PPT$ adversary~$\Adv$, there exists a $\PPT$ simulator~$\Sim$ that
succeeds in simulating the output of $\Adv$ when $\Adv$ is given the
obfuscation $\O(f)$ but $\Sim$ is given only black-box access to~$f$. This
definition does not say how hard (or easy) it is to find the corresponding
simulator~$\Sim$ for a given adversary $\Adv$.
When security with black-box access to the function depends on computational
hardness assumptions, this definition leaves open the possibility that the
obfuscation could be broken in practice without providing an algorithm that
breaks these assumptions.

A stronger and arguably more meaningful definition requires that there exist an efficient transformation from an adversary to its corresponding simulator, or
equivalently a \emph{universal} $\PPT$ simulator capable of simulating any $\PPT$
adversary $\Adv$ given the code of $\Adv$. We will refer to such a definition
as VBB obfuscation with a \emph{universal simulator}.

As we said above, we will show that VBB obfuscation with a universal
simulator is impossible for function families with super-polynomial
pseudo-entropy if general indistinguishability obfuscation is possible.

\paragraph{Auxiliary input.}
The definition of VBB security with auxiliary inputs, originally considered
in~\cite{GK05}, is a strengthening of VBB security, which corresponds to a
setting in which the adversary may have some additional a priori information.

Allowing auxiliary input is crucial when obfuscation is used together with
other components in a larger scheme or protocol. Consider, for example, a
zero-knowledge protocol in which one of the prover's messages to the verifier
contains an obfuscated program $\O(f)$. To prove that the protocol is
zero-knowledge, we would like to show that every verifier $\Ver$ has a
zero-knowledge simulator $\Sim_{\ZK}$ that can simulate $\Ver$'s view of the
protocol. Intuitively, $\Sim_{\ZK}$ would rely on the security of $\O$ by
thinking of $\Ver$ as an ``obfuscation adversary'' that is trying to learn
information from $\O(f)$. Such an adversary has an ``obfuscation simulator''
$\Sim_{\O}$ that can learn the same information given only black-box access
to $f$, and $\Sim_{\ZK}$ can try to use $\Sim_{\O}$. The problem is that the
view of $\Ver$ does not depend only on the code of $\Ver$, but also on
auxiliary input to $\Ver$, such as other prover messages and the statement
being proven. An obfuscation definition that does not allow auxiliary input
is insufficient to handle this case.

The problem can be avoided by using a definition that guarantees the
existence of an obfuscation simulator that can simulate the view of $\Ver$
given any auxiliary input. If the obfuscated program $f$ depends on other
prover messages or on the statement, then we require security with respect to
\emph{dependent} auxiliary input.  Otherwise \emph{independent} auxiliary
input suffices. The paper~\cite{GK05} considered both of these notions. In
the case of dependent auxiliary input, the virtual black box property is
required to hold even when the auxiliary input given to the adversary and
simulator depends on the actual, secret circuit being obfuscated. In the case
of independent auxiliary input, this requirement is weakened: the auxiliary
input may depend only on the family of circuits, which is public. The actual
circuit to be obfuscated is chosen randomly from the family, independently of
the auxiliary input given to the adversary and simulator.

More precisely, an obfuscator $\O$ for a function family~${\cal F}$ is
(worst-case) VBB secure with \emph{dependent} auxiliary inputs if for every
probabilistic polynomial-time ($\PPT$) adversary $\Adv$, there exists a
$\PPT$ simulator $\Sim$ such that for every $f\in{\cal F}$ and every
auxiliary input~$\aux$ (which may depend on the function~$f$), the output of
$\Adv(\O(f),\aux(f))$ is computationally indistinguishable from
$\Sim^f(\aux(f))$.  The average-case analogue of this definition requires
that the output of $\Adv(\O(f),\aux(f))$ be computationally indistinguishable
from $\Sim^f(\aux(f))$ for a \emph{random} function $f\leftarrow{\cal F}$.

VBB security with \emph{independent} auxiliary inputs is defined only with
respect to an average-case definition.\footnote{It is not clear how to
enforce  that the auxiliary input is independent of the function in a
worst-case definition.} An obfuscator $\O$ for a function family~${\cal F}$
is average-case VBB secure with \emph{independent} auxiliary inputs if for
every $\PPT$ adversary $\Adv$, there exists a $\PPT$ simulator $\Sim$ such
that for every auxiliary input $\aux$ and for a random $f\leftarrow{\cal F}$,
the output of $\Adv(\O(f),\aux)$ is computationally indistinguishable from
$\Sim^f(\aux)$.

For the case of dependent auxiliary input, Goldwasser and Kalai \cite{GK05}
showed that functions with super-polynomial pseudo-entropy cannot be VBB
obfuscated, assuming that a different class of \emph{point filter functions}
can be VBB obfuscated. For the weaker notion of VBB obfuscation with
independent auxiliary input, they showed a more restricted impossibility
result for a subclass of functions called \emph{filter functions}.  Our
results extend these theorems, assuming indistinguishability obfuscators exist.

\subsection{Overview of results and techniques}

First we prove that VBB security with a
universal simulator is equivalent to VBB security with auxiliary inputs,
which is the obfuscation version of the known equivalence for zero-knowledge
proofs \cite{O87}. More specifically, we consider
both \emph{worst-case} VBB security and \emph{average-case} VBB security.  In
the former the simulator is required to successfully simulate the output
of~$\Adv$ \emph{for every} function in the family~${\cal F}$, whereas in the
latter the simulator is required to successfully simulate the output
of~$\Adv$ only for a \emph{random} function in the family.

We prove that worst-case VBB security with a universal simulator is
equivalent to worst-case VBB security with \emph{dependent} auxiliary
inputs, and that average-case VBB security with a universal simulator is
equivalent to average-case VBB security with \emph{independent} auxiliary
inputs.  To be consistent with the literature, when we refer to VBB security
we always consider the worst-case version.  When we would like to consider
the average-case version we refer to it as average-case VBB.

\paragraph{Informal Lemma~1.}
\emph{A candidate obfuscator is a (worst-case) VBB obfuscator with a
universal simulator for a class of functions~${\cal F}$  if and only if it
is a (worst-case) VBB obfuscator for ${\cal F}$ with dependent auxiliary
inputs.}

\paragraph{Informal Lemma~2.}
\emph{A candidate obfuscator is an average-case VBB obfuscator with a
universal simulator for a class of functions~${\cal F}$  if and only if it
is an average-case VBB obfuscator for ${\cal F}$ with independent auxiliary
inputs.}

\smallskip 

We state and prove these results as Lemmas~\ref{lemma:worst-case}
and~\ref{lemma:avg-case} in Section~\ref{section:equiv}.

The above two lemmas imply that in order to obtain negative results for VBB
obfuscation with a universal simulator, it suffices to obtain negative
results for VBB obfuscation with auxiliary inputs.

\paragraph{New impossibility results.}
We show that indistinguishability obfuscation implies that any function
family with super-polynomial pseudo-entropy \emph{cannot} be VBB obfuscated
with auxiliary input. Loosely speaking, a function family ${\cal F}$ has
super-polynomial pseudo-entropy if it is difficult to distinguish a genuine
function in ${\cal F}$ from one that has been randomly modified in some
locations: for every polynomial~$p$ there exists a polynomial-size set~$I$ of
inputs such that no efficient adversary can distinguish between a random
function $f\leftarrow{\cal F}$ and such a function with its values on $I$
replaced with another random variable with min-entropy~$p$.  We refer the
reader to Definition~\ref{def:pseudo-entropy} for the precise definition, but
note that such families include all pseudo-random function families.  They
also include all semantically secure secret-key or public-key encryption
schemes or secure digital signature schemes, provided that the randomness is
generated by using a (secret) pseudo-random function. (See Claim~4.0.1 in
\cite{GK05}.)

Recently, the notion of witness encryption was put forth by Garg
et~al.~\cite{GGSW13}. It was observed by Goldwasser et~al.~\cite{GKPVZ13}
that an extractable version of witness encryption can be used to obfuscate
the class of point-filter functions with respect to dependent auxiliary
inputs.  Thus, together with~\cite{GK05}, this shows that the existence of an
extractable witness encryption scheme implies that \emph{any} function with
super-polynomial pseudo-entropy cannot be obfuscated with respect to
dependent auxiliary inputs.

Here we show that the proof of~\cite{GK05} actually implies that witness
encryption, \emph{without} the extractability property, suffices to prove
that all functions with super-polynomial pseudo-entropy are not obfuscatable
with respect to dependent auxiliary inputs.

\paragraph{Informal Theorem 3.}
\emph{Assume the existence of a witness encryption scheme.  Then no function
family with super-polynomial pseudo-entropy has an average-case VBB
obfuscator with respect to dependent auxiliary input.}

\smallskip 

The idea behind the proof is that functions with high pseudo-entropy cannot
be efficiently compressed; i.e., given oracle access to such a function, one
cannot produce a small circuit for it.  The reason is that functions with
genuinely high entropy cannot be compressed at all (let alone efficiently),
and no efficient algorithm can distinguish them from those with high
pseudo-entropy.

Using this observation, the proof works as follows. Suppose we wish to
construct an obfuscation $\O(f)$ of a function $f$ that has high
pseudo-entropy on a polynomial-size set $I$ of inputs. We use witness
encryption to encrypt a random bit $b$ so that it can be read only by someone
who knows a circuit of size at most $|\O(f)|$ for the values of $f$ on $I$.
Given this encryption of $b$ as auxiliary input, knowledge of the circuit
$\O(f)$ suffices to decrypt $b$. However, black-box access to $f$ is not
enough to produce any small circuit, and so VBB security is violated.

We note that this theorem is true in the strong sense: for \emph{any} secret
predicate~$\pi(f)$ that is not learnable from black-box access to~$f$, there
exists an adversary and auxiliary input $\aux(f)$ such that given $\O(f)$ and
$\aux(f)$, the adversary efficiently recovers~$\pi(f)$, whereas given
$\aux(f)$ and oracle access to~$f$, it is computationally hard to
recover~$\pi(f)$. Moreover, the theorem holds even if we restrict $\aux(f)$
to be an efficiently computable function of $f$.

It was shown by Garg et~al.~\cite{GGSW13} (using different terminology) that
indistinguishability obfuscation for point-filter functions implies the
existence of witness encryption.  Thus, the informal theorem above can be
restated as follows:  assuming the existence of indistinguishability
obfuscation for point-filter functions, functions with super-polynomial
pseudo-entropy are not average-case VBB obfuscatable with respect to
dependent auxiliary inputs.

For independent auxiliary input, we use of a different hypothesis, namely
indistinguishability obfuscation for \emph{puncturable pseudo-random
functions} (see Definition~\ref{def:punc_prf}).  Roughly speaking, these are
pseudo-random functions for which we can produce alternate keys that
effectively randomize the output for a specified input while leaving the rest
of the function unchanged.

\paragraph{Informal Theorem 4.}
\emph{Assume the existence of indistinguishability obfuscation for a class of
puncturable pseudo-random functions. Then no function family with
super-polynomial pseudo-entropy has an average-case VBB obfuscator with
respect to independent auxiliary input.}

\smallskip 

The proof of this theorem is a little more subtle than the previous proof.
Suppose we are trying to obfuscate a circuit family with high pseudo-entropy
on a set $I$ of inputs.  The auxiliary input will be $\iO(K_s)$, where $\iO$
denotes indistinguishability obfuscation and $K_s$ is a circuit that takes
another circuit $\tC$ as input and applies a puncturable pseudo-random
function $\G_s$ to the values $\tC(I)$ of $\tC$ on $I$. Here, $s$ is a random
key.

Now, let $\O(C)$ be a candidate obfuscation of a circuit $C$.  By definition,
applying the auxiliary circuit $\iO(K_s)$ to $\O(C)$ yields $K_s(C)$ (i.e.,
$\G_s(C(I))$), but we will show that $K_s(C)$ cannot be computed using only
black-box access to $C$.  If it could, then we could replace the $C$ oracle
with suitable random values $Y$ on $I$ and still get the answer $\G_s(Y)$, by
the definition of pseudo-entropy. Then we could modify the auxiliary input to
be $\iO(K^*_s)$, where the pseudo-random function in $K^*_s$ has been punctured
to randomize its value at $Y$.  The reason this modification is allowable is
that with high probability, $K_s$ and $K^*_s$ define the same function ($Y$
has entropy too high to be compressible to any small circuit, so no input
$\tC$ to $K^*_s$ will ever satisfy $\tC(I) = Y$).  Thus, $\iO(K_s)$ and
$\iO(K^*_s)$ are indistinguishable. However, by construction $K^*_s$ does not
determine the value $\G_s(Y)$, which is a contradiction.

We state and prove these results more formally as Theorems~\ref{thm:main1}
and~\ref{thm:main2}.  Together with Lemmas~\ref{lemma:worst-case}
and~\ref{lemma:avg-case}, they immediately yield impossibility results for
VBB obfuscation with a universal simulator. In particular,
Theorem~\ref{thm:main1} and Lemma~\ref{lemma:worst-case} imply the following
corollary.

\paragraph{Corollary~1.}
\emph{Assume the existence of a witness encryption scheme.  Then no function
family with super-polynomial pseudo-entropy has a VBB obfuscator with a
universal simulator.}

\smallskip 

As was the case for Theorem~\ref{thm:main1}, this corollary is true in the
strong sense: for \emph{any} secret predicate~$\pi(f)$ that is not learnable
from black-box access to~$f$, there exists an adversary that efficiently
recovers~$\pi(f)$ given $\O(f)$, whereas given the code of the adversary and
given oracle access to~$f$, it is computationally hard to recover~$\pi(f)$.

Theorem~\ref{thm:main2} and Lemma~\ref{lemma:avg-case} imply the following
corollary.

\paragraph{Corollary~2.}
\emph{Assume the existence of indistinguishability obfuscation for a class
of puncturable pseudo-random functions. Then no function family with
super-polynomial pseudo-entropy has an average-case VBB obfuscator with a
universal simulator.}

\section{Preliminaries}

Let ${\cal F}=\{f_s\}$ be a family of polynomial-size circuits. In what
follows, we write ${\cal F}=\bigcup_{k\in\N} {\cal F}_k$ with ${\cal
F}_k=\{f_s\}_{s\in\{0,1\}^k}$. Each circuit $f_s$ will have size
$\poly(|s|)$, where $\poly$ denotes an unspecified, polynomially-bounded
function.

\begin{definition}[\bf VBB obfuscation with universal simulator]\label{def:VBB}
Let ${\cal F}=\{f_s\}$ be a family of polynomial-size circuits.  We say that
a probabilistic algorithm $\O$ (mapping circuits to circuits) is an
obfuscation of ${\cal F}$ with a universal simulator if the following
conditions hold:
\begin{itemize}
\item \textbf{Correctness:} For every function $f_s\in{\cal F}$ and every
    possible input $x$,
\[
\O(f_s)(x)=f_s(x).
\]
I.e., the random variable $\O(f_s)$ defines the same function as $f_s$
with probability $1$.

\item \textbf{Polynomial slowdown:} There exists a polynomial~$p$ such
    that for every $f_s\in{\cal F}$, \[|\O(f_s)|\leq p(|f_s|).\]

\item \textbf{Security with a universal simulator:} There exists a (possibly non-uniform)
    $\PPT$~$\Sim$ such that for every (possibly non-uniform)
    $\PPT$~$\Adv$, every predicate~$\pi$, every $k\in\N$, and every
    $s\in\{0,1\}^k$,
    \begin{equation}\label{eqn:security}
    \left|\Pr[\Adv(\O(f_s))=\pi(s)]-\Pr[\Sim^{f_s}(\Adv)=\pi(s)]\right|=\negl(k),
    \end{equation}
    where the probabilities are over the random coin tosses of~$\Adv$ and
    $\Sim$.  Here $\negl(k)$ denotes an unspecified, negligible function
    (i.e., $|\negl(k)| = O(1/k^c)$ for each constant $c>0$).
\end{itemize}
We say that $\O$ is an {\bf average-case} obfuscation of ${\cal F}$ with a
universal simulator if Equation~\eqref{eqn:security} holds for {\bf random}
$s\leftarrow\{0,1\}^k$; in other words, it means there exists a (possibly
non-uniform) $\PPT$~$\Sim$ such that for every (possibly non-uniform)
$\PPT$~$\Adv$, every predicate~$\pi$, and every $k\in\N$,
\[
\left|\Pr[\Adv(\O(f_s))=\pi(s)]-\Pr[\Sim^{f_s}(\Adv)=\pi(s)]\right|=\negl(k),
\]
where the probabilities are over $s\leftarrow \{0,1\}^k$ and over the random
coin tosses of~$\Adv$ and $\Sim$.
\end{definition}

Note that we do not assume $\O(f_s)$ can be efficiently computed given $f_s$.
Our negative results rule out the existence of obfuscations, and not merely
the possibility of finding them.

When $\Adv$ is non-uniform, the notation $\Sim^{f_s}(\Adv)$ of course means
that $\Sim$ is given a circuit for $\Adv$ for inputs of the appropriate size.
When $\Adv$ is uniform, it means the same thing as in the non-uniform case;
equivalently, $\Sim$ is given the code for $\Adv$ together with
$1^{\rtime(\Adv(\O(f_s)))}$ to ensure that it is allowed enough time.

In Definition~\ref{def:VBB}, we have conflated the circuit size parameter $k$
and the security parameter of the obfuscation method.  One could distinguish
between them at the cost of more notation, but this conflation is of course
harmless for proving impossibility theorems.

\begin{definition}[\bf VBB obfuscation with auxiliary inputs]
Let ${\cal F}=\{f_s\}$ be a family of polynomial-size circuits.  We say that
a probabilistic algorithm $\O$ is an obfuscation of ${\cal F}$ with
(dependent) auxiliary inputs if it satisfies the correctness and polynomial
slowdown conditions of Definition~\ref{def:VBB}, and in addition it satisfies
the following security requirement:
\begin{itemize}
\item \textbf{Security with auxiliary inputs:}  For every (possibly
    non-uniform) $\PPT$ $\Adv$, there exists a (possibly non-uniform)
    $\PPT$ $\Sim$ such that for every predicate~$\pi$, every $k\in\N$,
    every $s\in\{0,1\}^k$, and every auxiliary input $\aux(s)$ of size
    $\poly(k)$,
    \begin{equation}\label{eqn:aux-security}
    \left|\Pr[\Adv(\O(f_s),\aux(s))=\pi(s,\aux(s))]-\Pr[\Sim^{f_s}(\aux(s))=\pi(s,\aux(s))]\right|=\negl(k),
    \end{equation}
where the probabilities are over the random coin tosses of~$\Adv$ and
$\Sim$.  We write $\aux(s)$ as a function of $s$ for clarity, but this is
not strictly necessary since the quantification automatically allows
dependence on $s$.
\end{itemize}
We say that $\O$ is an {\bf average-case} obfuscation of ${\cal F}$ with
(dependent) auxiliary inputs if Equation~\eqref{eqn:aux-security} holds for
{\bf random} $s\leftarrow\{0,1\}^k$; namely, if for every (possibly
non-uniform) $\PPT$ $\Adv$ there exists a (possibly non-uniform) $\PPT$
$\Sim$ such that for every predicate~$\pi$, every $k\in\N$, and every
auxiliary input $\aux(s)$ of size $\poly(s)$ (and allowed to depend on $s$),
\[
\left|\Pr[\Adv(\O(f_s),\aux(s))=\pi(s,\aux(s))]-\Pr[\Sim^{f_s}(\aux(s))=\pi(s,\aux(s))]\right|=\negl(k),
\]
where the probabilities are over $s\leftarrow \{0,1\}^k$ and over the random
coin tosses of~$\Adv$ and $\Sim$.
\end{definition}

In the definition above we allowed the auxiliary input to depend on the
function being obfuscated.  In what follows we define VBB obfuscation with
\emph{independent} auxiliary inputs, where we restrict the auxiliary input
to be \emph{independent} of the function being obfuscated.  For this
definition, only the average-case version makes sense, since in the
worst-case version it is not clear how to ensure that the auxiliary input is
independent of the function being obfuscated.

\begin{definition}[\bf Average-case VBB obfuscation with independent auxiliary inputs]
Let ${\cal F}=\{f_s\}$ be a family of polynomial-size circuits.  We say that
$\O$ is an obfuscation of ${\cal F}$ with independent auxiliary inputs if it
satisfies the correctness and polynomial slowdown conditions of
Definition~\ref{def:VBB}, and in addition it satisfies the following security
requirement:
\begin{itemize}
\item \textbf{Average-case security with independent auxiliary input:}
    For every (possibly non-uniform) $\PPT$ $\Adv$, there exists a
    (possibly non-uniform) $\PPT$ $\Sim$ such that for every
    predicate~$\pi$, every $k\in\N$, and every auxiliary input
    $\aux\in\{0,1\}^{\poly(k)}$,
    \begin{equation*}
    \left|\Pr[\Adv(\O(f_s),\aux)=\pi(s,\aux)]-\Pr[\Sim^{f_s}(\aux)=\pi(s,\aux)]\right|=\negl(k),
    \end{equation*}
where the probabilities are over $s\leftarrow\{0,1\}^k$ and over the
random coin tosses of~$\Adv$ and $\Sim$.
\end{itemize}

\end{definition}

\begin{definition}[\bf Witness encryption]
A witness encryption scheme for an $\NP$ language ${\cal L}$ with
corresponding witness relation ${\cal R}_{\cal L}$ is a pair of $\PPT$
algorithms $(\Enc,\Dec)$ such that the following conditions hold:
\begin{itemize}
\item \textbf{Correctness:} For all $(x, w) \in {\cal R}_{\cal L}$ and
    every $b\in\{0,1\}$,
\[
\Pr[\Dec(\Enc_x(1^k,b),w) = b] = 1 - \negl(k).
\]

\item\textbf{Semantic Security:} For every $x\not\in{\cal L}$ and every
    (possibly non-uniform) $\PPT$ adversary $\Adv$,
\[
\left|\Pr[\Adv(\Enc_x(1^k,0))=1]-\Pr[\Adv(\Enc_x(1^k,1))=1]\right|=\negl(k),
\]
where the probability is over the random coin tosses of $\Enc$ and
$\Adv$.
\end{itemize}
\end{definition}

\begin{definition}[\bf Indistinguishability obfuscation]\label{def:iO}
Let ${\cal C}$ be a family of polynomial-size circuits. A $\PPT$ algorithm
$\iO$ is said to be an indistinguishability obfuscator for $\cal{C}$ if it
satisfies the correctness and polynomial slowdown conditions of
Definition~\ref{def:VBB}, and in addition it satisfies the following security
requirement:
\begin{itemize}
\item \textbf{Indistinguishability:} For all $C,C' \in \cal{C}$ that are
    of the same size and define the same function, $\iO(C)$ and $\iO(C')$
    are computationally indistinguishable. More formally, for every
    (possibly non-uniform) $\PPT$ distinguisher $\D$,
    \[
     \left|\Pr[\D(\iO(C))=1]-\Pr[\D(\iO(C'))=1]\right|= \negl(k),
    \]
    where the probability is over the random coin tosses of $\iO$ and
    $\D$.
\end{itemize}
\end{definition}

Although Definition~\ref{def:VBB} did not require VBB obfuscation to be
efficiently computable (to obtain stronger impossibility results),
we require $\iO$ to be efficiently computable in Definition~\ref{def:iO}, because inefficient
indistinguishability obfuscation is trivial.

We next define puncturable pseudo-random functions. We consider a simple case
in which any PRF might be punctured at a single point. The definition is
formulated as in \cite{SahaiW13}.

\begin{definition}[\bf Puncturable PRFs]\label{def:punc_prf}
Let $\ell,m$ be polynomially bounded length functions. An efficiently
computable family of functions
\begin{align*}
\mathcal{G} = \set{\G_s\colon \{0,1\}^{m(k)}\rightarrow\{0,1\}^{\ell(k)}\pST s \in \{0,1\}^{k}, k\in\N},
\end{align*}
associated with an efficient (probabilistic) key sampler
$\mathsf{Gen}_\mathcal{G}$, is a puncturable PRF if there exists a puncturing
algorithm $\punc$ that takes as input a key $s\in \{0,1\}^k$ and a point $x^*
\in \{0,1\}^{m(k)}$ and outputs a punctured key $s_{x^*}$ so that the
following conditions are satisfied:
\begin{itemize}
\item \textbf{Functionality is preserved under puncturing:} For every
    $x^*\in \{0,1\}^{m(k)}$, if we sample $s$ from
    $\mathsf{Gen}_\mathcal{G}(1^k)$ and let $s_{x^*} =  \punc(s,x^*)$,
    then $\G_s$ and $\G_{s_{x^*}}$ have the same values at every point
    other than $x^*$ with probability $1$.

\item \textbf{Indistinguishability at punctured points:} The two
    ensembles
\begin{align*}
\set{\big(x^*,s_{x^*},\G_s(x^*)\big)\pST s \gets \mathsf{Gen}_\mathcal{G}(1^k),
    s_{x^*}= \punc(s,x^*)}_{x^*\in \{0,1\}^{m(k)},k\in \N},\\
\set{\big(x^*,s_{x^*}, u \big) \pST s \gets \mathsf{Gen}_\mathcal{G}(1^k),
    s_{x^*}=\punc(s,x^*),u \gets \{0,1\}^{\ell(k)}}_{x^*\in
    \{0,1\}^{m(k)},k\in \N}
\end{align*}
are computationally indistinguishable by (possibly non-uniform) $\PPT$
distinguishers.
\end{itemize}
\end{definition}
To be explicit, we include $x^*$ in the distribution; throughout, we shall
assume for simplicity that a punctured key $s_{x^*}$ includes $x^*$ in the
clear. As shown in \cite{BoyleGI13,BonehW13,KiayiasPTZ13}, the pseudo-random
functions from \cite{GoldreichGM86} yield puncturable PRFs as defined above.

\begin{definition}[\bf Pseudo-entropy of a circuit class]\label{def:pseudo-entropy}
Let $p=p(k)$  be a polynomial.  We say that a class of circuits ${\cal
C}=\bigcup_{k\in\N} {\cal C}_k$ has pseudo-entropy at least $p=p(k)$, if
there exists a polynomial~$t=t(k)$ and a subset $I_k\subseteq\{0,1\}^k$ of
size $t(k)$, and for every $C\in {\cal C}_k$ there exists a random variable
$Y^C=(Y_i)_{i \in I_k} \in \{0,1\}^{I_k}$, such that the following conditions
hold:
 \begin{enumerate}
 \item The random variable $Y^C$ has statistical min-entropy at least
     $p(k)$.  In other words, each of its values occurs with probability
     at most $2^{-p(k)}$.
 \item For every (possibly non-uniform) $\PPT$ distinguisher $\D$,
     \[
     \left|\Pr[\D^{C}(1^k)=1]-\Pr[\D^{C \circ Y^C}(1^k)=1]\right|= \negl(k),
     \]
     where $C \circ Y^C$ denotes an oracle that agrees with $C$ except
that $Y^C$ replaces the values of $C$ for inputs in $I_k$.  Here the
probabilities are over $C\leftarrow{\cal C}_k$, the random variable
$Y^C$, and the random coin tosses of $\D$.
 \end{enumerate}

We say that ${\cal C}$ has super-polynomial pseudo-entropy if it has
pseudo-entropy at least~$p$ for every polynomial~$p$, and we then call the
circuits in ${\cal C}$ pseudo-entropic.
\end{definition}

\section{Equivalence between a universal simulator and auxiliary inputs}
\label{section:equiv}
In this section we show that VBB obfuscation with a universal simulator is
equivalent to VBB obfuscation with auxiliary inputs. Specifically, we prove
the following two lemmas.

\begin{lemma}\label{lemma:worst-case}
Let ${\cal F}=\{f_s\}$ be a family of polynomial-size circuits. Then $\O$ is
a VBB obfuscator for ${\cal F}$ with a universal simulator if and only if it
is a VBB obfuscator for ${\cal F}$ with dependent auxiliary inputs.
\end{lemma}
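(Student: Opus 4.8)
The plan is to prove both directions by observing that the two definitions quantify over the same objects in a different order, and that one can always bundle an adversary together with its auxiliary input (and, conversely, unbundle them). The one subtlety is the order of quantifiers over the simulator: the universal-simulator definition has a single $\Sim$ that works for all $\Adv$, while the auxiliary-input definition lets $\Sim$ depend on $\Adv$. This is exactly parallel to Oren's equivalence for zero-knowledge \cite{O87}, and the resolution is the same: a universal simulator that is handed the code of $\Adv$ can internally ``be'' the $\Adv$-specific simulator.

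First I would prove the direction ``universal simulator $\Rightarrow$ dependent auxiliary inputs.'' Let $\Sim_U$ be the universal simulator. Given an adversary $\Adv$ for the auxiliary-input game, define $\Sim(\cdot) := \Sim_U(\Adv')$ where $\Adv'$ is the (non-uniform) adversary that has $\aux(s)$ hardwired and runs $\Adv(\cdot,\aux(s))$; more precisely, since $\aux(s)$ is not known when we fix $\Sim$, we let $\Sim^{f_s}(\aux(s))$ run $\Sim_U^{f_s}$ on the code of the adversary $\Adv_{\aux}$ that, on input $\O(f_s)$, outputs $\Adv(\O(f_s),\aux(s))$ — here $\Sim$ constructs this code from its own input $\aux(s)$ and the fixed code of $\Adv$. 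Similarly the predicate $\pi(s,\aux(s))$ in the auxiliary-input game is, for each fixed value of $\aux(s)$, just some predicate $\pi'(s)$ of $s$ alone, so it falls under the ``every predicate $\pi$'' quantifier of the universal-simulator definition. Equation~\eqref{eqn:security} applied to $\Adv_\aux$ and $\pi'$ then yields Equation~\eqref{eqn:aux-security}. One has to check that $\Sim$ so defined is $\PPT$: it is, because $\Sim_U$ is $\PPT$ and building the code of $\Adv_\aux$ from $\Adv$ and $\aux(s)$ is a polynomial-time operation (in the non-uniform setting $\Sim$ gets a circuit for $\Adv$; in the uniform setting it gets the code plus a time bound, as noted after Definition~\ref{def:VBB}).

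Next I would prove ``dependent auxiliary inputs $\Rightarrow$ universal simulator.'' Here I would exhibit a single universal adversary $\Adv^*$: on input a pair $(\O(f_s),\aux)$ where $\aux$ is interpreted as (the code of) an adversary $\Adv$, $\Adv^*$ simply runs $\Adv(\O(f_s))$ and outputs the result. Applying the dependent-auxiliary-input security of $\O$ to $\Adv^*$ gives a $\PPT$ simulator $\Sim^*$ such that for every $s$, every auxiliary input $\aux$, and every predicate, the outputs of $\Adv^*(\O(f_s),\aux)$ and $(\Sim^*)^{f_s}(\aux)$ are indistinguishable. Now define the universal simulator $\Sim_U$ by $\Sim_U^{f_s}(\Adv) := (\Sim^*)^{f_s}(\Adv)$, i.e.\ feed the code of $\Adv$ into $\Sim^*$ in the auxiliary-input slot. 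Taking $\aux$ to be the code of a given $\Adv$, and noting that $\Adv^*(\O(f_s),\Adv)$ has exactly the distribution of $\Adv(\O(f_s))$, we get Equation~\eqref{eqn:security}. (For uniform $\Adv$, $\Adv^*$ needs a time bound to know how long to run $\Adv$; this is supplied exactly as in the parenthetical remark after Definition~\ref{def:VBB}, and the size of $\aux$ is $\poly(k)$ as required since the relevant circuit size is polynomial in $k$.)

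The main obstacle — really the only non-bookkeeping point — is making sure the quantifier order over $\Sim$ is handled correctly in each direction, and that the ``predicate'' $\pi(s,\aux(s))$ of the auxiliary-input definition is legitimately absorbed into the ``every predicate $\pi(s)$'' of the universal-simulator definition once $\aux(s)$ is fixed (and symmetrically that a predicate of $s$ alone is a special case of a predicate of $(s,\aux)$). Both are immediate once spelled out. The remaining verifications — that the constructed $\Sim$ and $\Adv^*$ are $\PPT$, that auxiliary inputs of size $\poly(k)$ suffice to carry adversary codes, and that correctness and polynomial slowdown are untouched since both definitions share them verbatim — are routine, so I would state them briefly and move on.
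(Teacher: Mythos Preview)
Your proposal is correct and follows essentially the same approach as the paper: in one direction you hardwire $\aux(s)$ into $\Adv$ to form $\Adv_{\aux}$ and feed its code to the universal simulator, and in the other you define a universal adversary $\Adv^*$ that interprets its auxiliary input as adversary code and runs it, then take the auxiliary-input simulator for $\Adv^*$ as the universal simulator. The paper's proof is exactly this, written out with explicit triangle-inequality chains; your additional remark that $\pi(s,\aux(s))$ collapses to a predicate of $s$ once $\aux(s)$ is fixed is a point the paper leaves implicit.
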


\begin{lemma}\label{lemma:avg-case}
Let ${\cal F}=\{f_s\}$ be a family of polynomial-size circuits. Then $\O$ is
an average-case VBB obfuscator for ${\cal F}$ with a universal simulator if
and only if it is an average-case VBB obfuscator for ${\cal F}$ with
independent auxiliary inputs.
\end{lemma}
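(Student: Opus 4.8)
The plan is to mirror the proof of Lemma~\ref{lemma:worst-case}, but carefully track where the quantifier order changes and where the "average over $s$" lands. Recall the two definitions in play. Average-case VBB with a universal simulator asserts: $\exists\,\Sim\ \forall\,\Adv,\pi\ \big|\Pr_{s}[\Adv(\O(f_s))=\pi(s)]-\Pr_{s}[\Sim^{f_s}(\Adv)=\pi(s)]\big|=\negl(k)$. Average-case VBB with independent auxiliary inputs asserts: $\forall\,\Adv\ \exists\,\Sim\ \forall\,\pi,\aux\ \big|\Pr_{s}[\Adv(\O(f_s),\aux)=\pi(s,\aux)]-\Pr_{s}[\Sim^{f_s}(\aux)=\pi(s,\aux)]\big|=\negl(k)$. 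The crucial structural point is that the auxiliary input $\aux$ is quantified \emph{after} the simulator but, being independent of $s$, can be thought of as hard-wired non-uniform advice — and symmetrically, a universal simulator takes the adversary's code as an explicit input, so ``adversary'' plays the role that ``auxiliary input'' plays in the other definition. This is exactly the zero-knowledge equivalence of \cite{O87} transported to obfuscation.

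The first direction ($\Rightarrow$): assume $\O$ is average-case VBB with a universal simulator $\Sim$. Given any $\PPT$ adversary $\Adv$ for the auxiliary-input game, define a new adversary $\Adv'$ that takes a circuit $C$ as input, has $\aux$ hard-wired (treating $\aux$ as non-uniform advice — legitimate since $\aux$ does not depend on $s$), and outputs $\Adv(C,\aux)$. Set the auxiliary-input simulator to be $\Sim_\Adv(\cdot) := \Sim^{(\cdot)}(\Adv')$, i.e.\ run the universal simulator on the code of $\Adv'$; note $\Sim_\Adv$ is a single $\PPT$ machine depending only on $\Adv$ (and $\aux$ as advice). Then for every predicate $\pi$ and every $\aux$, applying the universal-simulator guarantee to the adversary $\Adv'$ and the predicate $\pi'(s):=\pi(s,\aux)$ gives $\big|\Pr_s[\Adv(\O(f_s),\aux)=\pi(s,\aux)]-\Pr_s[\Sim_\Adv^{f_s}(\aux)=\pi(s,\aux)]\big| = \big|\Pr_s[\Adv'(\O(f_s))=\pi'(s)]-\Pr_s[\Sim^{f_s}(\Adv')=\pi'(s)]\big| = \negl(k)$, which is the independent-auxiliary-input condition. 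One subtlety to handle here: $\Sim_\Adv$ must pass $\aux$ to $\Adv'$, but $\Adv'$ already has $\aux$ baked in, so strictly we let $\Sim_\Adv$ ignore its auxiliary-input argument, or equivalently fold it back in; either way the machine is well-defined and $\PPT$.

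The second direction ($\Leftarrow$): assume $\O$ is average-case VBB with independent auxiliary inputs. We must produce one universal simulator $\Sim$ that works for all adversaries. Define a single ``universal adversary'' $\Adv^*$ that takes as its auxiliary input the code of an adversary $\Adv$ (padded to the relevant length) together with whatever else $\Adv$ would have received, and simply runs $\Adv$ on $\O(f_s)$. Apply the independent-auxiliary-input guarantee to $\Adv^*$: it yields a $\PPT$ simulator $\Sim^*$ such that for every $\pi$ and every $\aux$ (in particular, for $\aux$ = code of $\Adv$), $\big|\Pr_s[\Adv^*(\O(f_s),\aux)=\pi(s,\aux)]-\Pr_s[(\Sim^*)^{f_s}(\aux)=\pi(s,\aux)]\big|=\negl(k)$. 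Now define the universal simulator by $\Sim^{f_s}(\Adv) := (\Sim^*)^{f_s}(\text{code of }\Adv)$; unwinding, this gives $\big|\Pr_s[\Adv(\O(f_s))=\pi(s)]-\Pr_s[\Sim^{f_s}(\Adv)=\pi(s)]\big|=\negl(k)$ for every $\Adv$ and $\pi$, as required. The main obstacle — and the step to get right — is the uniform/non-uniform bookkeeping around running times: when $\Adv$ is uniform, $\Sim$ must be given $1^{\rtime(\Adv(\O(f_s)))}$ so that $\Adv^*$ is genuinely $\PPT$ with $\Adv$'s code as input, exactly as flagged in the discussion after Definition~\ref{def:VBB}; and when $\Adv$ is non-uniform, "code of $\Adv$" means a circuit for the appropriate input length, so the length of $\aux$ grows with $k$, which is fine since $\aux$ is allowed to be any $\poly(k)$-length string and may depend on $k$. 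Once the conventions for feeding $\Adv$'s description (and, in the uniform case, its time bound) into the universal adversary/simulator are fixed consistently with Definition~\ref{def:VBB}, both directions are immediate rewrites of the quantifiers, and the $\negl$ bounds transfer verbatim because the probabilities over $s\leftarrow\{0,1\}^k$ and over internal coins are literally the same events on both sides.
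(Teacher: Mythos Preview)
Your proposal is correct and takes essentially the same approach as the paper: hardwire $\aux$ into the adversary and feed the resulting code to the universal simulator for $(\Rightarrow)$, and use a universal adversary that interprets its auxiliary input as adversary code for $(\Leftarrow)$. The only cosmetic difference is that in $(\Rightarrow)$ you momentarily describe $\Sim_\Adv$ as having $\aux$ ``as advice'' before correcting yourself with ``fold it back in''; the paper simply defines $\Sim_\Adv^{f_s}(\aux)$ to construct $\Adv_{\aux}$ from its input $\aux$ and then run $\Sim^{f_s}(\Adv_{\aux})$, which is exactly the fix you land on.
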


\paragraph{Proof of Lemma~\ref{lemma:worst-case}.}

\noindent $(\Rightarrow)$:  Suppose that $\O$ is a VBB obfuscator for ${\cal
F}$ with a universal simulator.  Namely, there exists a (possibly
non-uniform) $\PPT$~$\Sim$ such that for every (possibly non-uniform)
$\PPT$~$\Adv$, every predicate~$\pi$, every $k\in\N$ and every
$s\in\{0,1\}^k$,
\[
\left|\Pr[\Adv(\O(f_s))=\pi(s)]-\Pr[\Sim^{f_s}(\Adv)=\pi(s)]\right|=\negl(k),
\]
where the probabilities are over the random coin tosses of $\Adv$
and~$\Sim$.

We will prove that $\O$ is a VBB obfuscator for ${\cal F}$ with dependent
auxiliary inputs.  To this end, fix any (possibly non-uniform) $\PPT$
adversary~$\Adv$.  Let $\Sim_{\Adv}$ be the $\PPT$ simulator defined as
follows:  for every auxiliary input~$\aux(s)$, $\Sim_{\Adv}^{f_s}(\aux(s))$
runs the universal simulator~$\Sim^{f_s}$ on input $\Adv_{\aux(s)}$, where
$\Adv_{\aux(s)}$ is the (non-uniform) adversary that simulates~$\Adv$ with
auxiliary input~$\aux(s)$. We need to prove that for every predicate~$\pi$,
every $k\in\N$, and every $s\in\{0,1\}^k$,
\[
\left|\Pr[\Adv(\O(f_s),\aux(s))=\pi(s,\aux(s))]-\Pr[\Sim_\Adv^{f_s}(\aux(s))=\pi(s,\aux(s))]\right|=\negl(k),
\]
where the probabilities are over the random coin tosses of~$\Adv$
and~$\Sim$.

To do so, we check that
\begin{align*}
&\left|\Pr[\Adv(\O(f_s),\aux(s))=\pi(s,\aux(s))]-\Pr[\Sim_\Adv^{f_s}(\aux(s))=\pi(s,\aux(s))]\right|\\
& \qquad = \left|\Pr[\Adv(\O(f_s),\aux(s))=\pi(s,\aux(s))]-\Pr[\Sim^{f_s}(\Adv_{\aux(s)})=\pi(s,\aux(s))]\right|\\
& \qquad\leq\left|\Pr[\Adv(\O(f_s),\aux(s))=\pi(s,\aux(s))]-\Pr[\Adv_{\aux(s)}(\O(f_s))=\pi(s,\aux(s))]\right|\\
& \qquad \quad\  \phantom{} + \left|\Pr[\Adv_{\aux(s)}(\O(f_s))=\pi(s,\aux(s))]-\Pr[\Sim^{f_s}(\Adv_{\aux(s)})=\pi(s,\aux(s))]\right|\\
& \qquad =\negl(k),
\end{align*}
where the first equation follows by the definition of~$\Sim_\Adv$, the
inequality follows from the triangle inequality, and the last equation
follows from the definition of~$\Adv_{\aux(s)}$ and from the fact that $\O$
is
VBB secure with the universal simulator~$\Sim$.\\ 

\noindent $(\Leftarrow)$:  Suppose that $\O$ is a VBB obfuscator for ${\cal
F}$ with dependent auxiliary inputs.  Namely, for every (possibly
non-uniform) $\PPT$ $\Adv$ there exists a (possibly non-uniform)
$\PPT$~$\Sim$ such that for every predicate~$\pi$, every $k\in\N$, every
$s\in\{0,1\}^k$, and every auxiliary input $\aux(s)$ of size $\poly(k)$,
\[
\left|\Pr[\Adv(\O(f_s),\aux(s))=\pi(s,\aux(s))]-\Pr[\Sim^{f_s}(\aux(s))=\pi(s,\aux(s))]\right|=\negl(k),
\]
where the probabilities are over the random coin tosses of~$\Adv$ and~$\Sim$.
We prove that~$\O$ is a VBB obfuscator for ${\cal F}$ with a universal
simulator.  To this end, let $\Adv^*$ be a universal $\PPT$ adversary that
interprets its auxiliary input~$\aux=\aux(s)$ as a (possibly non-uniform)
$\PPT$ adversary and runs this adversary.
(As pointed out after Definition~\ref{def:VBB}, we must interpret this
carefully regarding running times in the uniform case.) The fact that~$\O$ is
a VBB obfuscator with dependent auxiliary inputs implies that there is a
$\PPT$ simulator~$\Sim$ such that for every predicate~$\pi$, every $k\in\N$,
every $s\in\{0,1\}^k$, and every auxiliary input $\aux(s)$ of size
$\poly(k)$,
\begin{equation}\label{eqn:A}
\left|\Pr[\Adv^*(\O(f_s),\aux(s))=\pi(s,\aux(s))]-\Pr[\Sim^{f_s}(\aux(s))=\pi(s,\aux(s))]\right|=\negl(k),
\end{equation}
where the probabilities are over the random coin tosses of~$\Adv^*$
and~$\Sim$. We claim that $\Sim$ is a universal simulator for $\O$. Namely,
we claim that for every (possibly non-uniform) $\PPT$ adversary~$\Adv$, every
predicate~$\pi$, every $k\in\N$, and every $s\in\{0,1\}^k$,
\[
\left|\Pr[\Adv(\O(f_s))=\pi(s)]-\Pr[\Sim^{f_s}(\Adv)=\pi(s)]\right|=\negl(k).
\]
To see why, note that
\begin{align*}
&\left|\Pr[\Adv(\O(f_s))=\pi(s)]-\Pr[\Sim^{f_s}(\Adv)=\pi(s)]\right|\\
&\qquad \leq \left|\Pr[\Adv(\O(f_s))=\pi(s)]-\Pr[\Adv^*(\O(f_s),\Adv)=\pi(s)]\right|\\
&\qquad \quad\  \phantom{} + \left|\Pr[\Adv^*(\O(f_s),\Adv)=\pi(s)]-\Pr[\Sim^{f_s}(\Adv)=\pi(s)]\right|\\
&\qquad = \left|\Pr[\Adv^*(\O(f_s),\Adv)=\pi(s)]-\Pr[\Sim^{f_s}(\Adv)=\pi(s)]\right|\\
&\qquad = \negl(k),
\end{align*}
where the inequality follows from the triangle inequality, the next equation
follows from the definition of~$\Adv^*$, and the last equation follows from
Equation~\eqref{eqn:A}. \qed

\medskip
\paragraph{Proof of Lemma~\ref{lemma:avg-case}.}

\noindent $(\Rightarrow)$:  Suppose that $\O$ is an average-case VBB
obfuscator for ${\cal F}$ with a universal simulator.  Namely, there exists a
(possibly non-uniform) $\PPT$~$\Sim$ such that for every (possibly
non-uniform) $\PPT$~$\Adv$, every predicate~$\pi$, and every $k\in\N$,
\[
\left|\Pr[\Adv(\O(f_s))=\pi(s)]-\Pr[\Sim^{f_s}(\Adv)=\pi(s)]\right|=\negl(k),
\]
where the probabilities are over $s\leftarrow\{0,1\}^k$ and over the random
coin tosses of $\Adv$ and~$\Sim$.

We will prove that $\O$ is an average-case VBB obfuscator for ${\cal F}$ with
independent auxiliary inputs.  To this end, fix any (possibly non-uniform)
$\PPT$ adversary~$\Adv$.  Let $\Sim_{\Adv}$ be the $\PPT$ simulator defined
as follows:  for every auxiliary input~$\aux$, $\Sim_{\Adv}^{f_s}(\aux)$ runs
the universal simulator~$\Sim^{f_s}$ on input $\Adv_{\aux}$, where
$\Adv_{\aux}$ is the (non-uniform) adversary that simulates~$\Adv$ with
auxiliary input~$\aux$. We need to prove that for every predicate~$\pi$,
every $k\in\N$, and every $\aux\in\{0,1\}^{\poly(k)}$,
\[
\left|\Pr[\Adv(\O(f_s),\aux)=\pi(s,\aux)]-\Pr[\Sim_\Adv^{f_s}(\aux)=\pi(s,\aux)]\right|=\negl(k),
\]
where the probabilities are over $s\leftarrow\{0,1\}^k$ and over the random
coin tosses of~$\Adv$ and~$\Sim$. To see why this is true, note that
\begin{align*}
&\left|\Pr[\Adv(\O(f_s),\aux)=\pi(s,\aux)]-\Pr[\Sim_\Adv^{f_s}(\aux)=\pi(s,\aux)]\right|\\
&\qquad =\left|\Pr[\Adv(\O(f_s),\aux)=\pi(s,\aux)]-\Pr[\Sim^{f_s}(\Adv_{\aux})=\pi(s,\aux)]\right|\\
&\qquad \leq\left|\Pr[\Adv(\O(f_s),\aux)=\pi(s,\aux)]-\Pr[\Adv_{\aux}(\O(f_s))=\pi(s,\aux)]\right|\\
&\qquad \quad\ \phantom{}+\left|\Pr[\Adv_{\aux}(\O(f_s))=\pi(s,\aux)]-\Pr[\Sim^{f_s}(\Adv_{\aux})=\pi(s,\aux)]\right|\\
&\qquad =\negl(k),
\end{align*}
where the first equation follows from the definition of~$\Sim_\Adv$, the
inequality follows from the triangle inequality, and the last equation
follows from the definition of~$\Adv_{\aux}$ and from the fact that
$\O$ is average-case VBB secure with the universal simulator~$\Sim$.\\

\noindent $(\Leftarrow)$:  Suppose that $\O$ is an average-case VBB
obfuscator for ${\cal F}$ with independent auxiliary inputs.  Namely, for
every (possibly non-uniform) $\PPT$ $\Adv$, there exists a (possibly
non-uniform) $\PPT$~$\Sim$ such that for every predicate~$\pi$, every
$k\in\N$, and every auxiliary input $\aux\in\{0,1\}^{\poly(k)}$,
\[
\left|\Pr[\Adv(\O(f_s),\aux)=\pi(s,\aux)]-\Pr[\Sim^{f_s}(\aux)=\pi(s,\aux)]\right|=\negl(k),
\]
where the probabilities are over $s\leftarrow\{0,1\}^k$ and over the random
coin tosses of~$\Adv$ and~$\Sim$.

We will prove that~$\O$ is an average-case VBB obfuscator for ${\cal F}$ with
a universal simulator.  To this end, let~$\Adv^*$ be a universal $\PPT$
adversary that interprets its auxiliary input~$\aux$ as a (possibly
non-uniform) $\PPT$ adversary and runs this adversary,
as in the previous proof. The fact that~$\O$ is an average-case VBB
obfuscator with independent auxiliary inputs implies that there is a $\PPT$
simulator~$\Sim$ such that for every predicate~$\pi$, every $k\in\N$, and
every auxiliary input~$\aux\in\{0,1\}^{\poly(k)}$,
\begin{equation}\label{eqn:B}
\left|\Pr[\Adv^*(\O(f_s),\aux)=\pi(s,\aux)]-\Pr[\Sim^{f_s}(\aux)=\pi(s,\aux)]\right|=\negl(k),
\end{equation}
where the probabilities are over $s\leftarrow\{0,1\}^k$ and over the random
coin tosses of~$\Adv^*$ and~$\Sim$. We claim that $\Sim$ is an average-case
universal simulator for $\O$.  Namely, we claim that for every (possibly
non-uniform) $\PPT$ adversary~$\Adv$, every predicate~$\pi$, and every
$k\in\N$,
\[
\left|\Pr[\Adv(\O(f_s))=\pi(s)]-\Pr[\Sim^{f_s}(\Adv)=\pi(s)]\right|=\negl(k),
\]
where the probabilities are over $s\leftarrow\{0,1\}^k$, and over the random
coin tosses of~$\Adv$ and~$\Sim$.

To see why, note that
\begin{align*}
&\left|\Pr[\Adv(\O(f_s))=\pi(s)]-\Pr[\Sim^{f_s}(\Adv)=\pi(s)]\right|\\
&\qquad \leq\left|\Pr[\Adv(\O(f_s))=\pi(s)]-\Pr[\Adv^*(\O(f_s),\Adv)=\pi(s)]\right|\\
&\qquad \quad\ \phantom{}+\left|\Pr[\Adv^*(\O(f_s),\Adv)=\pi(s)]-\Pr[\Sim^{f_s}(\Adv)=\pi(s)]\right|\\
&\qquad =\left|\Pr[\Adv^*(\O(f_s),\Adv)=\pi(s)]-\Pr[\Sim^{f_s}(\Adv)=\pi(s)]\right|\\
&\qquad =\negl(k),
\end{align*}
where the inequality follows from the triangle inequality, the next equation
follows from the definition of~$\Adv^*$, and the last equation follows from
Equation~\eqref{eqn:B}.

\qed

\section{Impossibility for obfuscation with auxiliary inputs}

As mentioned in the introduction, Goldwasser and Kalai~\cite{GK05} proved
that either point-filter functions are not obfuscatable with dependent
auxiliary inputs or \emph{all} function families with sufficient
pseudo-entropy are not obfuscatable with dependent auxiliary inputs. It was
recently observed by Goldwasser et~al.~\cite{GKPVZ13} that extractable
witness encryption implies that point-filter functions are obfuscatable with
dependent auxiliary inputs, and thus that any function family with
sufficient pseudo-entropy is not obfuscatable with dependent auxiliary
inputs. We now show that the same impossibility result  (with essentially
the same proof as in~\cite{GK05}) can be obtained assuming the existence of
witness encryption, without any extractability property.

\begin{theorem}\label{thm:main1}
Assume the existence of a witness encryption scheme for an $\NP$-complete
language. Then no function family with super-polynomial pseudo-entropy has an
average-case VBB obfuscator with respect to dependent auxiliary input.
\end{theorem}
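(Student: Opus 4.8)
The plan is to instantiate the sketch from the introduction: given a candidate average-case VBB obfuscator $\O$ for a pseudo-entropic family ${\cal F}=\{f_s\}$, I will exhibit a $\PPT$ adversary $\Adv$, an auxiliary input $\aux(s)$, and a predicate $\pi(s,\aux(s))$ such that $\Adv(\O(f_s),\aux(s))$ recovers $\pi$ with overwhelming probability, while no $\PPT$ simulator with oracle access to $f_s$ and input $\aux(s)$ can do better than guessing. Let $p=p(k)$ be the polynomial-slowdown bound applied to the circuit size, so $|\O(f_s)|\le p(k)$ for all $s\in\{0,1\}^k$; since ${\cal F}$ has super-polynomial pseudo-entropy, fix a polynomial bound $q(k) > p(k)$ and let $I_k$ be the corresponding index set of size $t(k)$, with random variables $Y^C$ of min-entropy at least $q(k)$ satisfying the indistinguishability condition of Definition~\ref{def:pseudo-entropy}. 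Consider the $\NP$ language
\[
{\cal L}=\set{(I,v) : \exists\ \text{a circuit } \Gamma \text{ with } |\Gamma|\le p(k) \text{ and } \Gamma(i)=v_i \text{ for all } i\in I},
\]
which (after a Cook--Levin reduction to the $\NP$-complete language for which witness encryption exists) admits a witness encryption scheme $(\Enc,\Dec)$; the witness is simply a small circuit computing $v$ on $I$.

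The construction is as follows. Pick a random bit $b$, set the statement $x_s=(I_k, f_s(I_k))$ where $f_s(I_k)$ denotes the vector of values of $f_s$ on $I_k$, and let the auxiliary input be $\aux(s)=\Enc_{x_s}(1^k,b)$; note $x_s\in{\cal L}$ always, via the witness $f_s$ itself (whose size is $\poly(k)$, not constrained by $p$). Define $\pi(s,\aux(s))=b$. The adversary $\Adv$, on input $(\O(f_s),\aux(s))$, evaluates $\O(f_s)$ on all of $I_k$ to obtain the vector $v=\O(f_s)(I_k)=f_s(I_k)$, and then uses $\O(f_s)$ itself — a circuit of size at most $p(k)$ — as a witness to decrypt: it outputs $\Dec(\aux(s),\O(f_s))=b$. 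Correctness of $\O$ guarantees $v=f_s(I_k)$ so $\O(f_s)$ is a valid witness, and correctness of witness encryption gives $\Adv(\O(f_s),\aux(s))=b$ with probability $1-\negl(k)$.

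For the simulator side, I argue that for a random $s\leftarrow\{0,1\}^k$, no $\PPT$ $\Sim^{f_s}(\aux(s))$ outputs $b$ with probability noticeably more than $1/2$. First replace the oracle $f_s$ by $f_s\circ Y^{f_s}$: by the pseudo-entropy condition this changes $\Sim$'s output distribution by only $\negl(k)$ (the reduction runs $\Sim$, which is $\PPT$, using the auxiliary input it can compute — here one must be slightly careful that $\aux(s)$ in the real experiment is tied to $f_s(I_k)$, whereas in the hybrid the oracle's values on $I_k$ are $Y^{f_s}$; I use a hybrid where $\Sim$ still gets $\aux$ built from the true $f_s(I_k)$ but queries $f_s\circ Y^{f_s}$, which is what the distinguisher in Definition~\ref{def:pseudo-entropy} simulates). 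Next, note that the simulator's view now depends on $f_s$ on $I_k$ only through $\aux(s)=\Enc_{x_s}(1^k,b)$ with $x_s=(I_k,f_s(I_k))$. The key point is that with overwhelming probability over $Y^{f_s}$ — or rather, I should set up the hybrid so the statement is $(I_k,Y^{f_s})$ — the statement is \emph{not} in ${\cal L}$: since $Y^{f_s}$ has min-entropy at least $q(k) > p(k)$, the probability that $Y^{f_s}$ equals $\Gamma(I_k)$ for any fixed circuit $\Gamma$ of size $\le p(k)$ is at most $2^{-q(k)}$, and there are at most $2^{O(p(k)\log p(k))}$ such circuits; choosing $q$ to dominate this exponent makes $(I_k,Y^{f_s})\notin{\cal L}$ except with negligible probability. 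Then semantic security of witness encryption says $\Enc(1^k,0)$ and $\Enc(1^k,1)$ are indistinguishable, so $\Sim$'s output is independent (up to $\negl(k)$) of $b$, hence equals $b$ with probability $\le 1/2+\negl(k)$. Combining the two sides contradicts the average-case VBB property.

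The main obstacle is getting the hybrid chain straight, in particular making the statement fed to the witness encryption \emph{simultaneously} in the language (so the real adversary can decrypt) and, in the simulator's world, out of the language (so semantic security applies). The resolution is to move from the true statement $(I_k,f_s(I_k))$ to $(I_k,Y^{f_s})$ via the pseudo-entropy indistinguishability — one constructs a single distinguisher $\D$ that gets oracle access to either $f_s$ or $f_s\circ Y^{f_s}$, reads off the values on $I_k$ to form the statement, encrypts $b$, runs $\Sim$ on the result with the same oracle, and checks whether $\Sim$ outputs $b$; this $\D$ is $\PPT$, so its two acceptance probabilities differ by $\negl(k)$, and in the $f_s\circ Y^{f_s}$ world the statement is out of ${\cal L}$ with overwhelming probability, enabling the semantic-security step. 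One also needs the routine observation that the circuit-counting bound $|\set{\Gamma : |\Gamma|\le p(k)}| \le 2^{O(p(k)\log p(k))}$ holds and that $q$ can be taken larger than this exponent while still being polynomial, which is exactly why super-polynomial (rather than merely super-$p$) pseudo-entropy is invoked. Finally, as the remark after the theorem notes, the same argument works verbatim for an arbitrary hard-to-learn predicate $\pi(f_s)$ in place of the random bit $b$, by encrypting $\pi(f_s)$ instead; I would state the proof for general $\pi$ and observe that $b$ is the special case, and also note $\aux(s)$ is efficiently computable from $f_s$ (given the coins of $\Enc$).
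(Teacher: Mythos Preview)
Your proposal is correct and follows essentially the same route as the paper: define the $\NP$ language of value-tuples computable by a small circuit, let the auxiliary input be a witness encryption of a random bit under the statement $C(I_k)$, have the adversary decrypt using $\O(C)$ as witness, and argue that any simulator can be turned into a pseudo-entropy distinguisher $\D$ that reads the values on $I_k$ from its oracle, forms the statement, encrypts $b$, runs $\Sim$, and compares---exactly the distinguisher in the paper's Claim~\ref{claim:we}. Your first attempt at the hybrid (keeping $\aux$ tied to the true $f_s(I_k)$ while swapping the oracle) would not work, since the distinguisher in Definition~\ref{def:pseudo-entropy} has no access to the unmodified $f_s$; but you catch this yourself and the corrected distinguisher in your ``main obstacle'' paragraph is the right one, and your circuit-counting is in fact slightly more careful than the paper's (which bounds the number of size-$m$ circuits by $2^{m}$ rather than $2^{O(m\log m)}$, a harmless simplification given super-polynomial pseudo-entropy).
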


In fact, the proof rules out average-case obfuscation if we restrict the
auxiliary input to be efficiently computable given the function (or even
oracle access to the function).

\begin{theorem}\label{thm:main2}
Assume the existence of indistinguishability obfuscation for a class of
puncturable pseudo-random functions.  Then no function family with
super-polynomial pseudo-entropy has an average-case VBB obfuscator with
respect to independent auxiliary input.
\end{theorem}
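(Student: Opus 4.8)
The plan is to follow the proof sketch given in the introduction, instantiating the adversary and auxiliary input explicitly and then arguing via a hybrid that no black-box simulator can succeed. Suppose for contradiction that $\O$ is an average-case VBB obfuscator with independent auxiliary input for a pseudo-entropic family ${\cal F} = \{f_s\}$. Let $p = p(k)$ be a polynomial that we will fix later (large enough to dominate the obfuscation size bound), and let $I_k$ and $Y^{f_s}$ be the index set and high-min-entropy random variable guaranteed by Definition~\ref{def:pseudo-entropy}. Let ${\cal G} = \{\G_s\}$ be a puncturable PRF with domain $\{0,1\}^{t(k)}$ (the length needed to hold a truth table on $I_k$) and with range long enough to serve as the target predicate's output; by hypothesis ${\cal G}$ has an indistinguishability obfuscator $\iO$. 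Define the circuit $K_s$ that on input a circuit $\tC$ (of size at most the obfuscation-size bound $p_{\O}(|f_s|)$) outputs $\G_s(\tC(I_k))$, i.e. applies $\G_s$ to the tuple of values of $\tC$ on the points of $I_k$; pad $K_s$ to a fixed polynomial size. The independent auxiliary input is $\aux = \iO(K_s)$ for a freshly sampled PRF key $s$, and the adversary $\Adv$, on input $(\O(f_{s'}), \aux)$, simply evaluates the circuit described by $\aux$ on the circuit $\O(f_{s'})$ and outputs the result. The predicate to be predicted is $\pi(s', \aux) = \G_s(f_{s'}(I_k))$, which the adversary recovers exactly by correctness of $\iO$ and of $\O$.

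Now I must show no $\PPT$ simulator $\Sim$ with oracle access to $f_{s'}$ can predict $\pi$ given $\aux$. I argue by a sequence of hybrids. In the first step, replace the oracle $f_{s'}$ given to $\Sim$ by $f_{s'} \circ Y^{f_{s'}}$; by Definition~\ref{def:pseudo-entropy} this changes $\Sim$'s output distribution negligibly, \emph{and} it changes the target predicate: now the ``correct'' answer would be $\G_s(Y^{f_{s'}})$, but note that $\Sim$'s goal is still to match $\pi(s',\aux) = \G_s(f_{s'}(I_k))$, which no longer even depends on the randomized oracle. The key point is this: in this hybrid world, $Y^{f_{s'}}$ has min-entropy at least $p(k)$, so with overwhelming probability no circuit of size $\le p_{\O}(|f_{s'}|)$ computes a function whose restriction to $I_k$ equals $Y^{f_{s'}}$, provided we chose $p$ large enough that $2^{p_{\O}(|f_{s'}|) \cdot \log(\text{gate count}) } \ll 2^{p(k)}$ (counting circuits of that size). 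Hence with overwhelming probability $K_s$ and the punctured circuit $K^*_s$, which hardwires $\punc(s, Y^{f_{s'}})$ and is defined to agree with $K_s$ on every input $\tC$ with $\tC(I_k) \ne Y^{f_{s'}}$, compute the \emph{identical} function. By the indistinguishability property of $\iO$, we may therefore replace $\aux = \iO(K_s)$ by $\aux^* = \iO(K^*_s)$ with only negligible change. Finally, in this last hybrid, the entire view of the simulator -- the oracle $f_{s'} \circ Y^{f_{s'}}$ and the auxiliary input $\aux^* = \iO(K^*_s)$ -- is generated from the punctured key $s_{Y^{f_{s'}}}$ alone, together with $Y^{f_{s'}}$ and the rest of the oracle, and is thus independent of $\G_s(Y^{f_{s'}})$; but simulating $\pi$ requires predicting $\G_s(f_{s'}(I_k)) = \G_s(Y^{f_{s'}})$ in the real world, which by the indistinguishability-at-punctured-points property is indistinguishable from a uniform string and hence unpredictable from the punctured view. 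So $\Sim$ succeeds with probability negligibly close to $2^{-\ell(k)}$, while $\Adv$ succeeds with probability $1$, contradicting the assumed VBB security.

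A subtle point I should be careful about is the order of quantifiers and where the average is taken. The VBB definition with independent auxiliary input fixes $\aux$ first and then draws $s' \leftarrow \{0,1\}^k$; but here $\aux = \iO(K_s)$ depends on an independently sampled PRF key $s$, not on the obfuscated key $s'$, so this is consistent with ``independent'' auxiliary input. However, $\aux$ is a \emph{random} variable (over the coins of $\iO$ and the choice of $s$), whereas the definition quantifies over \emph{fixed} strings $\aux$. This is handled in the standard way: if $\Sim$ simulates for every fixed $\aux$, it simulates for the distribution, so for at least one choice of coins for $(\iO, s)$ the distinguishing advantage against the target predicate is non-negligible for that fixed $\aux$; alternatively, one fixes $\aux$ non-uniformly as the ``best'' value. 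I should also double-check that $\pi$ is allowed to depend on $\aux$ (the definition writes $\pi(s,\aux)$, so yes), and that $K_s$, $K^*_s$ can be padded to the same size so that $\iO$'s same-size requirement is met.

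**The main obstacle.** The delicate part is the counting argument showing that, once the oracle is switched to $f_{s'} \circ Y^{f_{s'}}$, no input $\tC$ to $K^*_s$ can possibly satisfy $\tC(I_k) = Y^{f_{s'}}$, so that $K_s$ and $K^*_s$ are functionally identical and $\iO$ indistinguishability applies. This requires choosing the pseudo-entropy bound $p(k)$ larger than the logarithm of the number of circuits of size $p_{\O}(|f_s|)$ (which is itself a fixed polynomial in $k$ determined by $\O$), and being careful that $K_s$ really only ever queries $\G_s$ at points $\tC(I_k)$ arising from \emph{bounded-size} circuits $\tC$ -- this is why $K_s$ must explicitly restrict its input circuit's size. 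Getting this quantifier chasing right -- first the obfuscator $\O$ is given (fixing $p_{\O}$), then we pick $p$, then the pseudo-entropy structure of ${\cal F}$ for that $p$ supplies $I_k$ and $Y^C$ -- is where the proof has to be stated with care.
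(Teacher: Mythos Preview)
Your proposal is correct and follows essentially the same approach as the paper: the same auxiliary input $\iO(K_s)$, the same adversary evaluating the obfuscated $K_s$ on $\O(C)$, and the same three-step hybrid (replace the oracle by $C\circ Y^C$, replace the auxiliary input by $\iO(K^*_{s_{Y^C}})$ using the incompressibility of $Y^C$ plus $\iO$, then invoke pseudo-randomness at the punctured point). The quantifier discussion and the padding of $K_s,K^*_s$ to common size also match the paper.

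One point to tighten: in your first hybrid you write that the oracle switch ``changes $\Sim$'s output distribution negligibly'' while $\Sim$'s goal ``is still to match $\G_s(f_{s'}(I_k))$,'' and later you write $\G_s(f_{s'}(I_k)) = \G_s(Y^{f_{s'}})$. That equality is false in general, and tracking the output distribution alone is not enough --- you need the \emph{success probability} to transfer. The clean reduction (which the paper uses implicitly) is to let the pseudo-entropy distinguisher query its oracle at $I_k$ to obtain $v$, sample $s$ and $\iO(K_s)$ itself, run $\Sim$ with that oracle and auxiliary input, and output~$1$ iff $\Sim$'s answer equals $\G_s(v)$; then the target automatically becomes $\G_s(Y^C)$ when the oracle is $C\circ Y^C$, and the success probabilities in the two worlds are negligibly close by Definition~\ref{def:pseudo-entropy}.
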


We describe the specific class for which we need indistinguishability
obfuscation in the proof of the theorem.

Theorems~\ref{thm:main1} and~\ref{thm:main2}, together with
Lemmas~\ref{lemma:worst-case} and~\ref{lemma:avg-case}, immediately yield
impossibility results for VBB obfuscation with a universal simulator. In
particular, Theorem~\ref{thm:main1} and Lemma~\ref{lemma:worst-case} imply
the following corollary.

\begin{corollary}
Assume the existence of a witness encryption scheme for an $\NP$-complete
language. Then no function family with super-polynomial pseudo-entropy has a
VBB obfuscator with a universal simulator.
\end{corollary}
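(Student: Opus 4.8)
The plan is to derive the corollary by combining Theorem~\ref{thm:main1} with Lemma~\ref{lemma:worst-case}, both of which are already established. Suppose, for contradiction, that some function family ${\cal F}=\{f_s\}$ with super-polynomial pseudo-entropy admits an obfuscator $\O$ that is VBB secure with a universal simulator (in the worst-case sense of Definition~\ref{def:VBB}). By Lemma~\ref{lemma:worst-case}, $\O$ is then a VBB obfuscator for ${\cal F}$ with dependent auxiliary inputs. Now observe that worst-case VBB security with dependent auxiliary inputs trivially implies its average-case analogue: if the simulation error is negligible for every fixed $s\in\{0,1\}^k$, then it is negligible when averaged over $s\leftarrow\{0,1\}^k$, since an average of negligible quantities (each bounded in absolute value by the same negligible bound, uniformly over the at most $2^k$ choices of $s$ — but in fact we only need the worst-over-$s$ bound, which is negligible by hypothesis) is negligible. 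Hence $\O$ is an average-case VBB obfuscator for ${\cal F}$ with dependent auxiliary input, contradicting Theorem~\ref{thm:main1}, whose hypothesis (existence of a witness encryption scheme for an $\NP$-complete language) we have assumed. This contradiction completes the proof.

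First I would spell out the quantifier bookkeeping in the reduction from worst-case to average-case dependent-auxiliary-input security, since this is the only nontrivial logical step: the definition of worst-case security with a universal simulator fixes a single $\Sim$ that works for all $\Adv$; passing through Lemma~\ref{lemma:worst-case} yields, for each $\Adv$, a simulator $\Sim_{\Adv}$; and then one notes that the inequality in Equation~\eqref{eqn:aux-security}, holding for every $s\in\{0,1\}^k$ with a negligible bound that does not depend on $s$, continues to hold when one replaces the fixed $s$ by a uniformly random $s\leftarrow\{0,1\}^k$ and takes the expectation — by the triangle inequality for expectations, $\big|\mathbb{E}_s[A(s)-B(s)]\big|\le \mathbb{E}_s\big[|A(s)-B(s)|\big]\le \max_s|A(s)-B(s)| = \negl(k)$, where $A(s)=\Pr[\Adv(\O(f_s),\aux(s))=\pi(s,\aux(s))]$ and $B(s)=\Pr[\Sim_{\Adv}^{f_s}(\aux(s))=\pi(s,\aux(s))]$. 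This shows average-case dependent-auxiliary-input security, which is exactly what Theorem~\ref{thm:main1} rules out.

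There is essentially no hard part here: the corollary is a formal consequence of two results proved earlier in the paper, plus the elementary observation that worst-case security implies average-case security. The only point requiring minor care is to make sure that the notion of ``average-case VBB obfuscator with respect to dependent auxiliary input'' invoked in Theorem~\ref{thm:main1} matches the one obtained by averaging the worst-case guarantee — in particular that the auxiliary input $\aux(s)$ is still allowed to depend on $s$ in both, which it is. Regarding the ``strong sense'' remark following the corollary in the excerpt (that for any black-box-unlearnable predicate $\pi(f)$ there is an adversary recovering $\pi(f)$ from $\O(f)$ while $\pi(f)$ remains hard given the code of that adversary and oracle access to $f$), this follows by tracing the same reduction: the strong form of Theorem~\ref{thm:main1} exhibits an $\Adv$ and an $\aux(f)$ defeating dependent-auxiliary-input security with respect to $\pi$; folding $\aux(f)$ into the adversary's code via the universal adversary $\Adv^*$ of the $(\Leftarrow)$ direction of Lemma~\ref{lemma:worst-case} yields the claimed adversary whose code, given only oracle access to $f$, does not reveal $\pi(f)$.
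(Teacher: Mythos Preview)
Your proposal is correct and takes essentially the same approach as the paper, which simply asserts that the corollary follows from Theorem~\ref{thm:main1} and Lemma~\ref{lemma:worst-case} without spelling out any further argument. You are more explicit than the paper in noting and justifying the intermediate step that worst-case VBB security with dependent auxiliary input (obtained from Lemma~\ref{lemma:worst-case}) implies the average-case version ruled out by Theorem~\ref{thm:main1}; your handling of the ``strong sense'' remark also tracks the paper's Remark~\ref{remark:main}.
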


Theorem~\ref{thm:main2} and Lemma~\ref{lemma:avg-case} imply the following
corollary.

\begin{corollary}
Assume the existence of indistinguishability obfuscation for a class of
puncturable pseudo-random functions. Then no function family with
super-polynomial pseudo-entropy has an average-case VBB obfuscator with a
universal simulator.
\end{corollary}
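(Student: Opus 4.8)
The corollary is immediate once Theorem~\ref{thm:main2} is established: by Lemma~\ref{lemma:avg-case}, a candidate $\O$ is an average-case VBB obfuscator for ${\cal F}$ with a universal simulator if and only if it is an average-case VBB obfuscator for ${\cal F}$ with independent auxiliary inputs, and Theorem~\ref{thm:main2} says the latter is impossible for every family of super-polynomial pseudo-entropy under the stated $\iO$ assumption. So the plan is really to prove Theorem~\ref{thm:main2}, and then invoke Lemma~\ref{lemma:avg-case}; I sketch the former.

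Suppose toward a contradiction that ${\cal F}=\{f_s\}$ has super-polynomial pseudo-entropy and $\O$ is an average-case VBB obfuscator for ${\cal F}$ with independent auxiliary input; let $L=L(k)$ be a polynomial upper bounding both $|f_s|$ and $|\O(f_s)|$ for all $f_s\in{\cal F}_k$. I first pick a polynomial $p$ large enough that $2^{p(k)}$ exceeds $2^k$ times the number of Boolean circuits of size at most $L(k)$, and apply Definition~\ref{def:pseudo-entropy} with this $p$ to obtain a polynomial $t=t(k)$, sets $I_k\subseteq\{0,1\}^k$ of size $t(k)$, and random variables $Y^{f_s}$ of min-entropy at least $p(k)$ meeting the two pseudo-entropy conditions. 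Let ${\cal G}=\{\G_\kappa\}$ be a puncturable PRF (Definition~\ref{def:punc_prf}) with domain $\{0,1\}^{t(k)}$; for a key $\kappa$ let $K_\kappa$ be the polynomial-size circuit that, on input the description of a circuit $\tC$ of size at most $L(k)$, evaluates $\tC$ on the $t(k)$ points of $I_k$ and outputs $\G_\kappa(\tC(I_k))$, and for a point $y$ let $K^*_{\kappa_y}$ be the same circuit built from the punctured key $\kappa_y=\punc(\kappa,y)$, padded to the size of $K_\kappa$. The auxiliary input is $\aux=\iO(K_\kappa)$ for a fresh $\kappa$, the predicate $\pi(s,\aux)$ is the first output bit of the circuit $\aux$ on the description of $f_s$, and the adversary $\Adv$, given $(\O(f_s),\aux)$, runs the circuit $\aux$ on the description of $\O(f_s)$ and outputs its first bit. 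By correctness of $\iO$ and of $\O$ this output equals the first bit of $\G_\kappa(f_s(I_k))$, which is precisely $\pi(s,\aux)$, so $\Adv$ always recovers $\pi(s,\aux)$. Hence any simulator $\Sim$ witnessing the VBB property must, with probability $1-\negl(k)$ over $s\gets\{0,1\}^k$, the key $\kappa$, and all coins, output the first bit of $\G_\kappa(f_s(I_k))$.

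From here I derive a contradiction in three hybrid steps, freezing suitable non-uniform choices of $(s,\kappa,Y)$ along the way---necessary because $Y^{f_s}$ is only guaranteed to exist, not to be efficiently sampleable. First, pseudo-entropy condition~2, applied to the distinguisher that samples $\kappa$, forms $\iO(K_\kappa)$, runs $\Sim$ with the given oracle, separately queries that oracle on $I_k$, and accepts iff $\Sim$'s output is the first bit of $\G_\kappa$ of the queried values, shows that $\Sim^{f_s\circ Y}(\iO(K_\kappa))$ outputs the first bit of $\G_\kappa(Y)$ with probability $1-\negl(k)$, where $Y=Y^{f_s}$. Second, fix $s$, $\kappa$, and a value $Y$ of $Y^{f_s}$ for which this near-$1$ success still holds and for which no circuit of size at most $L(k)$ agrees with $Y$ on $I_k$ (all but a $\negl(k)$ fraction of $Y$, by the choice of $p$, since the truth table of $Y$ is incompressible to $L(k)$ gates); then $K_\kappa$ and $K^*_{\kappa_Y}$ compute the same function, so $\iO$ security gives that $\Sim^{f_s\circ Y}(\iO(K^*_{\kappa_Y}))$ still outputs the first bit of $\G_\kappa(Y)$ with probability $1-\negl(k)$. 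Third, since $K^*_{\kappa_Y}$ depends only on the punctured key $\kappa_Y$, indistinguishability at the punctured point (Definition~\ref{def:punc_prf}) lets us replace $\G_\kappa(Y)$ by a uniformly random string $u$; but then $\Sim$'s entire view is independent of $u$, so it matches the first bit of $u$ with probability exactly $1/2$, contradicting the previous step. This proves Theorem~\ref{thm:main2}, and Lemma~\ref{lemma:avg-case} then yields the corollary.

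The delicate point is the second step: everything hinges on choosing the pseudo-entropy parameter $p$ \emph{after}, hence much larger than, the slowdown bound $L$, so that a typical $Y$ cannot be compressed into a circuit of size $L(k)$ and therefore $K_\kappa\equiv K^*_{\kappa_Y}$; without this the $\iO$ swap is illegal, and indeed the simulator could then hard-wire the values it queried on $I_k$ into a small circuit, feed it to $\aux$, and win on its own. One must also check that the input-size bound of $K_\kappa$ is simultaneously large enough to accept $\O(f_s)$ (so the adversary works) and too small to encode the truth table of $Y$ (so the two $K$-circuits agree)---compatible precisely because the min-entropy of $Y$ is taken super-polynomially above $L$. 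The rest---the order of quantifiers over $\Adv$, $\Sim$, $\pi$, and $\aux$, and the passage from averages to frozen non-uniform advice in each hybrid---is routine, and the step from the universal-simulator formulation to the auxiliary-input one is entirely handled by Lemma~\ref{lemma:avg-case}.
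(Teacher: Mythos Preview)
Your proposal is correct and follows essentially the same approach as the paper: derive the corollary from Theorem~\ref{thm:main2} via Lemma~\ref{lemma:avg-case}, and prove Theorem~\ref{thm:main2} by the three-hybrid argument (pseudo-entropy to swap $C$ for $C\circ Y^C$, $\iO$ to swap $K_\kappa$ for its punctured version once $Y$ is incompressible, then punctured-PRF security). The only cosmetic discrepancies are that the paper takes the puncturable PRF to have one-bit output (so your ``first bit'' is unnecessary) and domain $\{0,1\}^{\ell'(k)\cdot t(k)}$ rather than $\{0,1\}^{t(k)}$, and it keeps all hybrids as averaged probabilities rather than freezing non-uniform advice; neither affects the argument.
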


All that remains is to prove Theorems~\ref{thm:main1} and~\ref{thm:main2}.
For notation in both proofs, let $\mathcal{C} = \bigcup_{k \in
\N}\mathcal{C}_k$ be a class of circuits with super-polynomial pseudo-entropy
such that each $C\in \mathcal{C}_k$ maps $\{0,1\}^{\ell(k)}$ to
$\{0,1\}^{\ell'(k)}$. Let $\O$ be any candidate obfuscator for $\mathcal{C}$,
and let $m(k)$ be a polynomial such that $|\O(C)| \leq m(k)$ for every $C \in
\mathcal{C}_k$.

\subsection{Proof of Theorem~\ref{thm:main1}}

The fact that ${\cal C}$ has super-polynomial pseudo-entropy implies that it
has pseudo-entropy at least $m(k)+k$. In particular, recalling
Definition~\ref{def:pseudo-entropy}, this implies that there exists a
polynomial $t=t(k)$ and a subset $I_k\subseteq\{0,1\}^k$ of size $t(k)$ such
that for every $C$ there exists a random variable $Y^C=(Y_1,\ldots,Y_t)$ such
that the following conditions hold:
 \begin{enumerate}
 \item The random variable $Y^C$ has statistical min-entropy at least
     $m(k)+k$.
 \item For every (possibly non-uniform) $\PPT$ distinguisher $\D$,
     \[
     \left|\Pr[\D^{C}(1^k)=1]-\Pr[\D^{C \circ Y^C}(1^k)=1]\right|= \negl(k),
     \]
     where $C \circ Y^C$ denotes an oracle that agrees with $C$ except
that $Y^C$ replaces the values of $C$ for inputs in $I_k$.  Here the
probabilities are over $C\leftarrow{\cal C}_k$, the random variable
$Y^C$, and the random coin tosses of $\D$.
 \end{enumerate}

We define an $\NP$ language ${\cal L}$ by
\[
{\cal L}=\set{(x_i)_{i \in I_k} \pST k \in \N
\mbox{ and there exists a circuit }C \mbox{ of size }|C|\leq p(k) \mbox{ such that }C(i)=x_i \mbox{ for all } i\in I_k}.
\]

Set $x=(C(i))_{i\in I_k}$ and let $\aux(C)=\Enc_x(1^k,b)$, where
$b\leftarrow\{0,1\}$ is a random bit and $\Enc$ is a witness encryption for
the language ${\cal L}$.  Note that the fact that there is a witness
encryption for an $\NP$-complete language implies that there is a witness
encryption for every $\NP$ language, and in particular for ${\cal L}$.

Given $\O(C)$ and $\aux(C)=\Enc_x(1^k,b)$, one can efficiently decrypt~$b$
with probability $1-\negl(k)$, since $\O(C)$ is a valid witness of~$x$. It
remains to prove the following claim.

\begin{claim}\label{claim:we}
For any (possibly non-uniform) $\PPT$ adversary $\Sim$ which takes as input
$\aux(s)=\Enc_x(1^k,b)$ and has black-box access to $C$,
\[
\Pr[\Sim^{C}(\Enc_x(1^k,b))=b]\leq \frac{1}{2}+\negl(k).
\]
\end{claim}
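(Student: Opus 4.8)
The plan is to reduce breaking this claim to breaking the semantic security of the witness encryption scheme, using the pseudo-entropy property of ${\cal C}$ as the key bridge. Suppose for contradiction that there is a $\PPT$ adversary $\Sim$ with oracle access to $C$ and taking $\aux(C)=\Enc_x(1^k,b)$ as input such that $\Pr[\Sim^C(\Enc_x(1^k,b))=b]\geq \frac12 + 1/\poly(k)$ for infinitely many $k$, where the probability is over $C\leftarrow {\cal C}_k$, $b\leftarrow\{0,1\}$, the coins of $\Enc$, and the coins of $\Sim$. The first step is to move from the real function $C$ to the ``randomized'' oracle $C\circ Y^C$. Specifically, consider the experiment in which we sample $C\leftarrow{\cal C}_k$, sample $Y^C$, set $y=(Y_i)_{i\in I_k}$, sample $b\leftarrow\{0,1\}$, and run $\Sim^{C\circ Y^C}(\Enc_y(1^k,b))$. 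I claim that $\Sim$ still predicts $b$ with probability $\geq \frac12 + 1/\poly(k) - \negl(k)$ in this modified experiment: otherwise, the algorithm $\D$ that, given an oracle $\O$, samples $b$, computes $x=(\O(i))_{i\in I_k}$ by querying the oracle on $I_k$, runs $\Sim^{\O}(\Enc_x(1^k,b))$, and outputs $1$ iff $\Sim$'s output equals $b$, would distinguish oracle access to $C$ from oracle access to $C\circ Y^C$ with non-negligible advantage, contradicting condition~2 of Definition~\ref{def:pseudo-entropy} (this $\D$ is $\PPT$ since $|I_k|=t(k)$ is polynomial and $\Sim,\Enc$ are $\PPT$).

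The second step is to observe that with overwhelming probability $x=y$ is \emph{not} in the language ${\cal L}$. Indeed, membership of $y$ in ${\cal L}$ means there is a circuit $C'$ of size at most $p(k)$ with $C'(i)=y_i$ for all $i\in I_k$; the number of such circuits is at most $2^{O(p(k)\log p(k))}$, a fixed polynomial in the exponent, and in any case at most $2^{q(k)}$ for some polynomial $q$. Wait---we need the bound to beat the min-entropy $m(k)+k$ of $Y^C$. Here we should be a bit careful: the statement of the claim and the definition of ${\cal L}$ use $p(k)$, the polynomial-slowdown bound, and $|\O(C)|\leq m(k)$, so in fact we take the size bound in ${\cal L}$ to be $m(k)$ (the obfuscation of $C$ has size at most $m(k)$ and is a valid witness). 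The number of circuits of size at most $m(k)$ on $\ell(k)$-bit inputs is at most $2^{O(m(k)\log m(k))}$; this is not obviously smaller than $2^{m(k)+k}$. To fix this, in the actual proof one chooses the pseudo-entropy parameter large enough: invoking super-polynomial pseudo-entropy, take pseudo-entropy at least $p'(k)$ for a polynomial $p'$ with $p'(k) > c\, m(k)\log m(k) + k$ for the relevant constant $c$, and define ${\cal L}$ with witness-size bound $m(k)$. Then the number of $y$ for which a size-$\leq m(k)$ circuit agrees with $y$ on $I_k$ is at most $2^{c\, m(k)\log m(k)}$, so by the union bound and the min-entropy of $Y^C$, $\Pr[y\in{\cal L}] \leq 2^{c\,m(k)\log m(k)} \cdot 2^{-p'(k)} \leq 2^{-k}=\negl(k)$.

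The third step finishes the argument: condition on the overwhelming-probability event that $y\notin{\cal L}$. Then $\Sim^{C\circ Y^C}(\Enc_y(1^k,b))$ is an algorithm that, given only an encryption under a false statement $y\notin{\cal L}$, predicts the encrypted bit $b$ with probability $\geq \frac12 + 1/\poly(k) - \negl(k)$. But here there is one subtlety to handle: $\Sim$ also gets oracle access to $C\circ Y^C$, which depends on $C$ and $Y^C$, whereas semantic security of witness encryption is stated for a fixed $x\notin{\cal L}$ and a non-uniform $\PPT$ adversary with no oracle. This is resolved by a standard averaging/hardwiring argument: there must exist a fixed choice of $C$ and $Y^C$ (hence a fixed circuit for the oracle $C\circ Y^C$ and a fixed false statement $y=y^*\notin{\cal L}$) for which $\Sim$, with this oracle hardwired as non-uniform advice, still predicts $b$ with probability $\geq \frac12 + 1/\poly(k)-\negl(k)$ over $b$ and the coins of $\Enc,\Sim$. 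This non-uniform $\PPT$ adversary then distinguishes $\Enc_{y^*}(1^k,0)$ from $\Enc_{y^*}(1^k,1)$ with advantage $\geq 2(\frac12+1/\poly(k)-\negl(k)) - 1 = 1/\poly(k)-\negl(k)$, contradicting semantic security. The main obstacle is getting the parameters right in the second step---namely ensuring the pseudo-entropy is chosen large enough to dominate the logarithm of the number of small circuits---and cleanly handling the oracle dependence in the third step via non-uniform hardwiring; the rest is a routine hybrid/reduction argument.
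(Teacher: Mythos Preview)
Your proposal is correct and follows essentially the same three-step reduction as the paper: swap $C$ for $C\circ Y^C$ via the pseudo-entropy guarantee, argue that $Y^C\notin\mathcal{L}$ with overwhelming probability, and derive a contradiction to the semantic security of witness encryption. The only differences are cosmetic: the paper counts circuits of ``size $m(k)$'' as bit-strings (hence at most $2^{m(k)}$ of them, so min-entropy $m(k)+k$ already suffices and your enlarged pseudo-entropy parameter is unnecessary), and it leaves implicit the averaging/hardwiring step you spell out in your third step.
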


\begin{proof}

Suppose for the sake of contradiction that there exists a $\PPT$ adversary
$\Sim$ such that
\[
\Pr[\Sim^{C}(\Enc_x(1^k,b))=b]\geq \frac{1}{2}+\epsilon(k)
\]
for some non-negligible function~$\epsilon$, where the probability is over
random $C\leftarrow {\cal C}_k$, the choice of $b$, and the randomness of
$\Enc$.

Let $\D$ be the distinguisher that, given oracle access to $C$, does the
following.  First, it computes $x=(C(i))_{i\in I_k}$ by querying the oracle
$t(k)$ times. Then it computes $\Enc_x(1^k,b)$ and simulates
$\Sim^{C}(\Enc_x(1^k,b))$ to arrive at its output.

By assumption,
\[
\Pr[\D^{C}(1^k)=b]\geq \frac{1}{2}+\epsilon(k).
\]
Thus, because ${\cal C}$ has super-polynomial pseudo-entropy,
\begin{equation}\label{eqn:S2}
\Pr[\D^{C \circ Y^C}(1^k)=b]\geq \frac{1}{2}+\epsilon(k)+\negl(k).
\end{equation}
When it is given oracle access to $C\circ Y^C$, $\D$ replaces $x$ with $x^* =
Y^C$, and at the end it is trying to recover $b$ from $\Enc_{x^*} (1^k,b)$.

Note however that $x^*$ has min-entropy $m(k)+k$, and so the probability that
it is in ${\cal L}$ is at most $2^{-k}$.  (For each of the at most $2^{m(k)}$
circuits of size $m(k)$ in the definition of ${\cal L}$, the probability of
obtaining $x^*$ is at most $2^{-m(k)-k}$.) Thus, Equation~\eqref{eqn:S2}
contradicts the semantic security of the underlying witness-encryption
scheme.
\end{proof}

\begin{remark}\label{remark:main}
Note that for any secret predicate~$\pi$ that is not learnable from black-box
access to the circuit, we could have taken the auxiliary input to be
$\aux(C)=\Enc_{x}(1^k,b)$ where $b=\pi(C)$ (as opposed to being truly
random). In this case, there exists a $\PPT$ adversary~$\Adv$ that given the
obfuscated circuit~$\O(C)$ and the auxiliary input~$\aux(C)$ outputs $\pi(C)$
with probability~$1$, whereas any $\PPT$ simulator cannot learn $\pi(C)$ from
$\aux(C)$ and black-box access to~$C$.

Using Lemma~\ref{lemma:worst-case}, we conclude that for any secret
predicate~$\pi$ that is not learnable from black-box access to the circuit
and for any circuit~$C$ there exists an adversary $\Adv_{\aux(C)}$ that
outputs $\pi(C)$ with probability~$1$, whereas any universal simulator
$\Sim$, which is given black box access to~$C$ and takes as input the code
of~$\Adv_{\aux(C)}$, cannot learn the predicate~$\pi(C)$.

Thus our negative result is a strong one: VBB obfuscation with a universal
simulator cannot conceal \emph{any} secret predicate that is not learnable
from black-box access to the circuit.
\end{remark}

\subsection{Proof of Theorem~\ref{thm:main2}}

We first describe an auxiliary-input distribution ensemble $\Z$ and a $\PPT$
adversary $\Adv$ such that given $z\gets\Z$ and an obfuscation of $C\gets
\mathcal{C}$, $\Adv$ always learns some predicate $\pi(C,z)$. Then, we show
that any $\PPT$ simulator that is only given oracle access to $C$ fails to
learn the predicate.

\paragraph{The auxiliary input distribution $\Z$.}
By assumption, $\mathcal{C}$ has pseudo-entropy at least $m(k) + k$. Let
$\set{I_k}_{k\in\N}$ be the sets guaranteed by
Definition~\ref{def:pseudo-entropy}, where $I_k$ is of polynomial size
$t(k)$, and let $\mathcal{G}$ be a puncturable one-bit PRF family
\[
\mathcal{G} = \set{\G_s \colon \{0,1\}^{\ell'(k) \cdot t(k)} \rightarrow \{0,1\} \pST s \in \{0,1\}^k, k\in\N}.
\]

We define two circuit families
\begin{align*}
\mathcal{K} &= \set{K_{s}\colon\{0,1\}^{m(k)}\rightarrow\{0,1\}\pST s \in \{0,1\}^{k}, k\in\N},\\
\mathcal{K}^* &= \set{K^*_{s_{x^*}}\colon\{0,1\}^{m(k)}\rightarrow\{0,1\} \pST s \in \{0,1\}^{k}, x^* \in \{0,1\}^{\ell'(k)\cdot t(k)},k\in\N}.
\end{align*}

Given a circuit $\tC\colon\{0,1\}^{\ell}\rightarrow\{0,1\}^{\ell'}$ of size
$m$ as input, the circuit $K_{s}$ computes $x:=\tC(I_k):= (\tC(i))_{i\in
I_k}$ and outputs $\G_s(x)$.  See Figure~\ref{cfam}.

\protocol {} {The circuit $K_{s}$.} {cfam} {
\begin{description}
\item[Hardwired:] a PRF key $s \in \{0,1\}^k$ and the set $I_k$.
\item[Input:] a circuit
    $\tC\colon\{0,1\}^{\ell}\rightarrow\{0,1\}^{\ell'}$, where
    $|\tC|=m(k)$.
\begin{enumerate}
\item Compute $x=\tC(I_k)$.
\item Return $\G_s(x)$.
\end{enumerate}
\end{description}
}

The circuit $K^*_{s_{x^*}}$, has a hardwired PRF key $s_{x^*}$ that was
derived from $s$ by puncturing it at the point $x^*$. It operates the same as
$K_{s}$, except that when $x=x^*$, it outputs an arbitrary bit, say, $0$. See
Figure~\ref{cfamstar}.  In particular, if $x^* \neq \tC(I_k)$ for all
circuits $\tC\in\{0,1\}^{m(k)}$, then $K^*_{s_{x^*}}$ and $K_{s}$ compute the
exact same function.

\protocol {} {The circuit $K^*_{s_{x^*}}$.} {cfamstar} {
\begin{description}
\item[Hardwired:] a punctured PRF key $s_{x^*}= \punc(s,x^*)$ and the set
    $I_k$.
\item[Input:] a circuit
    $\tC\colon\{0,1\}^{\ell}\rightarrow\{0,1\}^{\ell'}$, where
    $|\tC|=m(k)$.
\begin{enumerate}
\item Compute $x=\tC(I_k)$.
\item If $x\neq x^*$, return $\G_{s_{x^*}}(x)$.
\item If $x= x^*$, return $0$.
\end{enumerate}
\end{description}
}

We are now ready to define our auxiliary-input distribution
$\Z=\set{Z_k}_{k\in\N}$. Let $d=d(k)$ be the maximal size of circuits in
either $\mathcal{K}$ or $\mathcal{K}^*$, corresponding  to security parameter
$k$. Denote by $[K]_d$ a circuit $K$ padded with zeros to size $d$, and by
$[\mathcal{K}]_d$ the class of circuits where every circuit $K \in
\mathcal{K}$ is replaced with $[K]_d$. Let $\iO$ be an indistinguishability
obfuscator for the class $[\mathcal{K}\cup\mathcal{K}^*]_d$.

The distribution $Z_k$ simply consists of an obfuscated (padded) circuit
$K_s$ for a randomly generated $s$.  See Figure~\ref{auxdist}.

\protocol {} {The auxiliary input distribution $Z_k$.} {auxdist} {
\begin{enumerate}
\item Sample $s \gets \mathsf{Gen}_\mathcal{G}(1^k)$.
\item Sample an obfuscation $z \gets \iO([K_s]_{d(k)})$.
\item Output $z$.
\end{enumerate}
}

\paragraph{The adversary $\Adv$ and predicate $\pi$.}
The adversary $\Adv$, given auxiliary input $z = [\iO(K_s)]_{d(k)}$ and an
obfuscation $\O(C)$ with $C\in \mathcal{C}_k$, outputs
\[
z(\O(C)) = K_s(\O(C)) = \G_s(\O(C)(I_k))=\G_s(C(I_k)),
\]
where the above follows by the definition of $K_s$ and the functionality of
$\iO$ and $\O$.

Thus, $\Adv$ always successfully outputs the predicate
\[
\pi(C,K_s) = K_s(C) = \G_s(C(I_k)).
\]

\paragraph{Adversary $\Adv$ cannot be simulated.}
We prove the following claim implying that the candidate obfuscator $\O$ for
the class $\mathcal{C}$ fails to meet the virtual black box requirement:

\begin{claim}\label{claim:impossibility}
For any $\PPT$ simulator $\Sim$,
\[
\Pr_{
\substack{C\gets \mathcal{C}_k\\
z\gets Z_k}}
\left[\Sim^{C}(z)=\pi(C,z)\right] \leq \frac{1}{2}+\negl(k).
\]
\end{claim}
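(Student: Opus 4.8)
The plan is to argue by contradiction via a hybrid argument, following the sketch in the introduction. Suppose some $\PPT$ simulator $\Sim$ satisfies $\Pr_{C,z}[\Sim^C(z) = \pi(C,z)] \geq \tfrac{1}{2} + \epsilon(k)$ for a non-negligible $\epsilon$. The overall strategy is to move through three hybrid experiments: (i) the real experiment, where $\Sim$ gets oracle access to $C$ and auxiliary input $z = \iO([K_s]_d)$; (ii) an experiment where the oracle $C$ is replaced by $C \circ Y^C$, i.e., the values on $I_k$ are replaced by the high-min-entropy random variable $Y^C$; (iii) an experiment where, additionally, the auxiliary input is replaced by $\iO([K^*_{s_{x^*}}]_d)$ with $x^* = Y^C$, i.e., the key is punctured at the point $Y^C$. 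In the final hybrid I will show that $\Sim$ cannot output $\G_s(Y^C)$ with probability noticeably better than $\tfrac{1}{2}$, because the view of $\Sim$ is now independent of that bit — contradicting the assumption.

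The first transition, from (i) to (ii), uses the pseudo-entropy property of $\mathcal{C}$ (Definition~\ref{def:pseudo-entropy}): one packages $\Sim$ together with the sampling of $s$, the obfuscation $z \gets \iO([K_s]_d)$, and a comparison of $\Sim$'s output to $\pi$ into a single distinguisher $\D$ with oracle access to its circuit, so that the success probability changes by at most $\negl(k)$. Note that here $\pi(C,z) = \G_s(C(I_k))$ depends on $C$ only through $C(I_k)$, so in hybrid (ii) the ``target'' becomes $\G_s(Y^C)$; this is exactly the value that $\D$'s wrapper can check (it knows $s$ and it queries the oracle on $I_k$ to learn $Y^C$). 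The second transition, from (ii) to (iii), is the key step and uses two facts together. First, since $Y^C$ has min-entropy $m(k) + k$, the probability that there exists any circuit $\tC$ of size $m(k)$ with $\tC(I_k) = Y^C$ is at most $2^{-k}$ (a union bound over the at most $2^{m(k)}$ such circuits); hence with overwhelming probability $[K_s]_d$ and $[K^*_{s_{x^*}}]_d$ compute the exact same function when $x^* = Y^C$, since the only input on which they could differ is a preimage of $x^*$. Second, by functionality preservation under puncturing, $\G_{s_{x^*}}$ and $\G_s$ agree off $x^*$. Therefore the two circuits agree everywhere except possibly at an unreachable input, so by indistinguishability obfuscation (Definition~\ref{def:iO}) the auxiliary inputs $\iO([K_s]_d)$ and $\iO([K^*_{s_{x^*}}]_d)$ are computationally indistinguishable, and $\Sim$'s success probability changes by at most $\negl(k)$. (One must be a bit careful: $x^*$ is itself a random variable, so this step is phrased as: conditioned on any fixing of $Y^C$ for which no size-$m(k)$ circuit hits it, the two circuits are identical, and an iO distinguisher separating the two experiments would break iO; the $2^{-k}$ bad event is absorbed into the negligible term.)

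Finally, in hybrid (iii), the view of $\Sim$ consists of the oracle $C \circ Y^C$ and the auxiliary input $\iO([K^*_{s_{Y^C}}]_d)$; by the indistinguishability-at-punctured-points property of the puncturable PRF, the triple $(x^*, s_{x^*}, \G_s(x^*))$ is indistinguishable from $(x^*, s_{x^*}, u)$ for uniform $u$. Since $\Sim$'s entire view can be generated from $Y^C$ and the punctured key $s_{Y^C}$ (the oracle $C\circ Y^C$ is determined by $C$ and $Y^C$, and $C$ is sampled independently), $\Sim$'s output is independent of $\G_s(Y^C)$ up to negligible error, so it equals $\G_s(Y^C)$ with probability at most $\tfrac{1}{2} + \negl(k)$. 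Chaining the three bounds gives $\tfrac{1}{2} + \epsilon(k) \leq \tfrac{1}{2} + \negl(k)$, contradicting non-negligibility of $\epsilon$. The main obstacle, as noted, is the iO step: one has to handle the fact that $x^* = Y^C$ is random and correlated with the rest of the experiment, and that the equality of the two circuits only holds on the overwhelmingly-likely event that $Y^C$ has no small preimage — the argument must be set up so that iO is invoked only with a fixed pair of functionally-identical circuits, with the residual bad event folded into the negligible slack.
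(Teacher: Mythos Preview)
Your proposal is correct and follows essentially the same three-step hybrid argument as the paper: replace $C$ by $C\circ Y^C$ via pseudo-entropy, swap $\iO([K_s]_d)$ for $\iO([K^*_{s_{x^*}}]_d)$ via the min-entropy bound plus $\iO$ security, and then invoke indistinguishability at punctured points to bound the final success probability. If anything, you are more explicit than the paper about the care needed in the $\iO$ step (conditioning on the overwhelming event that $Y^C$ has no size-$m(k)$ preimage) and in the final step (spelling out that $\Sim$'s view is computable from $(C,Y^C,s_{Y^C})$ alone); the paper compresses that last step into a single sentence.
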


\begin{proof}
Assume towards contradiction that there exists a $\PPT$ simulator $\Sim$ that
learns $\pi(C,z)$ with probability $\frac{1}{2}+\epsilon(k)$, for some
non-negligible $\epsilon$. We show how to use $\Sim$ to break either the
pseudo-entropy of $\mathcal{C}$ or the pseudo-randomness at punctured points
of $\mathcal{G}$.

According to the definition of $Z_k$,
\[
\Pr
\left[\Sim^{C}(\iO([K_s]_d)=\G_s(C(I_k))\right] \geq \frac{1}{2}+\epsilon(k),
\]
where the probability is over $C\gets \mathcal{C}_k$, $s \gets
\mathsf{Gen}_\mathcal{G}(1^k)$, and the random coin tosses of $\Sim$.

Now, for every $C\in \mathcal{C}_k$, let $Y^C=(Y_1,\dots,Y_t)$ be the random
variable guaranteed by the pseudo-entropy of values in $I_k$
(Definition~\ref{def:pseudo-entropy}). We first consider an alternative
experiment in which the oracle $C$ is replaced with an oracle $C\circ Y^C$
that behaves like $C$ on all points outside $I_k$, and on points in $I_k$
answers according to $Y^C$. We claim that
\[
\Pr
\left[\Sim^{C\circ Y^C}(\iO([K_s]_d)=\G_s(Y^C)\right] \geq \frac{1}{2}+\epsilon(k) -\negl(k),
\]
where the probability is over $C\gets \mathcal{C}_k$, the random variable
$Y^C$, $s \gets \mathsf{Gen}_\mathcal{G}(1^k)$, and the coin tosses of
$\Sim$. Indeed, this follows directly from the pseudo-entropy guarantee
(Definition~\ref{def:pseudo-entropy}), together with the fact that a
distinguisher can sample $s$ and compute $\iO([K_s]_d)$ on its own.

Next, we change the above experiment so that instead of an
indistinguishability obfuscation of $K_s$, the simulator gets an
indistinguishability obfuscation of the circuit $K^*_{s_x^*}$, where $s$ is
punctured at the point $x^* =Y^C$. We claim that
\[
\Pr
\left[\Sim^{C\circ Y^C}(\iO([K^*_{s_{x_*}}]_d)=\G_s(Y^C)\right] \geq \frac{1}{2}+\epsilon(k) -\negl(k),
\]
where the probability is over $C\gets \mathcal{C}_k$, the random variable
$Y^C$, $s \gets \mathsf{Gen}_\mathcal{G}(1^k)$, and the coin tosses of
$\Sim$, $x^*=Y^C$, and $s_{x^*}=\punc(s,x^*)$. Indeed, recalling that $Y^C$
has min-entropy $m(k)+k$ for every $C\in \mathcal{C}_k$, there does not exist
a circuit $\tC$ such that $x^*:= Y^C=\tC(I_k)$, except with negligible
probability $2^{-k}$. However, recall that in this case $K_s$ and
$K^*_{s_{x^*}}$ have the exact same functionality, and thus the above follows
by the indistinguishability obfuscation guarantee.

It is now left to note that $\Sim$ predicts with noticeable advantage the
value of $\G_s$ at the punctured point $x^*$, and thus violates the
pseudo-randomness at punctured points requirement
(Definition~\ref{def:punc_prf}).
\end{proof}

\newcommand{\etalchar}[1]{$^{#1}$}


\begin{thebibliography}{GGSW13}

\bibitem[BBC{\etalchar{+}}14]{BarakBCKPS13}
Boaz Barak, Nir Bitansky, Ran Canetti, Yael~Tauman Kalai, Omer Paneth, and Amit
  Sahai.
\newblock Obfuscation for evasive functions.
\newblock In Yehuda Lindell, editor, {\em Theory of Cryptography (TCC 2014)},
  volume 8349 of {\em Lecture Notes in Computer Science}, pages 26--51.
  Springer, 2014.

\bibitem[BGI{\etalchar{+}}01]{BGIRSVY-conf}
Boaz Barak, Oded Goldreich, Russell Impagliazzo, Steven Rudich, Amit Sahai,
  Salil~P. Vadhan, and Ke~Yang.
\newblock On the (im)possibility of obfuscating programs.
\newblock In Joe Kilian, editor, {\em Advances in Cryptology -- CRYPTO 2001},
  volume 2139 of {\em Lecture Notes in Computer Science}, pages 1--18.
  Springer, 2001.

\bibitem[BGI13]{BoyleGI13}
Elette Boyle, Shafi Goldwasser, and Ioana Ivan.
\newblock Functional signatures and pseudorandom functions.
\newblock Cryptology ePrint Archive, Report 2013/401, 2013.
\newblock \url{http://eprint.iacr.org/}.

\bibitem[BGK{\etalchar{+}}13]{BGKPS13}
Boaz Barak, Sanjam Garg, Yael~Tauman Kalai, Omer Paneth, and Amit Sahai.
\newblock Protecting obfuscation against algebraic attacks.
\newblock Cryptology ePrint Archive, Report 2013/631, 2013.
\newblock \url{http://eprint.iacr.org/}.

\bibitem[BR13a]{BR13b}
Zvika Brakerski and Guy~N. Rothblum.
\newblock Black-box obfuscation for $d$-cnfs.
\newblock Cryptology ePrint Archive, Report 2013/557, 2013.
\newblock \url{http://eprint.iacr.org/}.

\bibitem[BR13b]{BR13}
Zvika Brakerski and Guy~N. Rothblum.
\newblock Virtual black-box obfuscation for all circuits via generic graded
  encoding.
\newblock Cryptology ePrint Archive, Report 2013/563, 2013.
\newblock \url{http://eprint.iacr.org/}.

\bibitem[BW13]{BonehW13}
Dan Boneh and Brent Waters.
\newblock Constrained pseudorandom functions and their applications.
\newblock Cryptology ePrint Archive, Report 2013/352, 2013.
\newblock \url{http://eprint.iacr.org/}.

\bibitem[Can97]{Canetti97}
Ran Canetti.
\newblock Towards realizing random oracles: Hash functions that hide all
  partial information.
\newblock In Burton~S. Kaliski, Jr., editor, {\em Advances in Cryptology --
  CRYPTO '97}, volume 1294 of {\em Lecture Notes in Computer Science}, pages
  455--469. Springer, 1997.

\bibitem[CD08]{CD08}
Ran Canetti and Ronny~Ramzi Dakdouk.
\newblock Extractable perfectly one-way functions.
\newblock In Luca Aceto, Ivan Damg{\aa}rd, Leslie~Ann Goldberg, Magn{\'u}s~M.
  Halld{\'o}rsson, Anna Ing{\'o}lfsd{\'o}ttir, and Igor Walukiewicz, editors,
  {\em Proceedings of the 35th International Colloquium on Automata, Languages
  and Programming, Part II}, volume 5126 of {\em Lecture Notes in Computer
  Science}, pages 449--460. Springer, 2008.

\bibitem[CRV10]{CanettiRV10}
Ran Canetti, Guy~N. Rothblum, and Mayank Varia.
\newblock Obfuscation of hyperplane membership.
\newblock In Daniele Micciancio, editor, {\em Theory of Cryptography (TCC
  2010)}, volume 5978 of {\em Lecture Notes in Computer Science}, pages 72--89.
  Springer, 2010.

\bibitem[CV13]{CanettiV13}
Ran Canetti and Vinod Vaikuntanathan.
\newblock Obfuscating branching programs using black-box pseudo-free groups.
\newblock Cryptology ePrint Archive, Report 2013/500, 2013.
\newblock \url{http://eprint.iacr.org/}.

\bibitem[GGH{\etalchar{+}}13]{GGHRSW13}
Sanjam Garg, Craig Gentry, Shai Halevi, Mariana Raykova, Amit Sahai, and Brent
  Waters.
\newblock Candidate indistinguishability obfuscation and functional encryption
  for all circuits.
\newblock In {\em 54th Annual IEEE Symposium on Foundations of Computer Science
  (FOCS 2013)}, pages 40--49. IEEE Computer Society, 2013.

\bibitem[GGJS13]{GGJS13}
Shafi Goldwasser, Vipul Goyal, Abhishek Jain, and Amit Sahai.
\newblock Multi-input functional encryption.
\newblock Cryptology ePrint Archive, Report 2013/727, 2013.
\newblock \url{http://eprint.iacr.org/}.

\bibitem[GGM86]{GoldreichGM86}
Oded Goldreich, Shafi Goldwasser, and Silvio Micali.
\newblock How to construct random functions.
\newblock {\em J. ACM}, 33(4):792--807, 1986.

\bibitem[GGSW13]{GGSW13}
Sanjam Garg, Craig Gentry, Amit Sahai, and Brent Waters.
\newblock Witness encryption and its applications.
\newblock In Dan Boneh, Tim Roughgarden, and Joan Feigenbaum, editors, {\em
  45th Annual ACM Symposium on Theory of Computing (STOC 2013)}, pages
  467--476. ACM, 2013.

\bibitem[GK05]{GK05}
Shafi Goldwasser and Yael~Tauman Kalai.
\newblock On the impossibility of obfuscation with auxiliary input.
\newblock In {\em 46th Annual IEEE Symposium on Foundations of Computer Science
  (FOCS 2005)}, pages 553--562. IEEE Computer Society, 2005.

\bibitem[GKP{\etalchar{+}}13]{GKPVZ13}
Shafi Goldwasser, Yael~Tauman Kalai, Raluca~A. Popa, Vinod Vaikuntanathan, and
  Nickolai Zeldovich.
\newblock Reusable garbled circuits and succinct functional encryption.
\newblock In Dan Boneh, Tim Roughgarden, and Joan Feigenbaum, editors, {\em
  45th Annual ACM Symposium on Theory of Computing (STOC 2013)}, pages
  555--564. ACM, 2013.

\bibitem[GR07]{GR07}
Shafi Goldwasser and Guy~N. Rothblum.
\newblock On best-possible obfuscation.
\newblock In Salil~P. Vadhan, editor, {\em Theory of Cryptography (TCC 2007)},
  volume 4392 of {\em Lecture Notes in Computer Science}, pages 194--213.
  Springer, 2007.

\bibitem[HSW13]{HSW13}
Susan Hohenberger, Amit Sahai, and Brent Waters.
\newblock Replacing a random oracle: Full domain hash from indistinguishability
  obfuscation.
\newblock Cryptology ePrint Archive, Report 2013/509, 2013.
\newblock \url{http://eprint.iacr.org/}.

\bibitem[KPTZ13]{KiayiasPTZ13}
Aggelos Kiayias, Stavros Papadopoulos, Nikos Triandopoulos, and Thomas
  Zacharias.
\newblock Delegatable pseudorandom functions and applications.
\newblock Cryptology ePrint Archive, Report 2013/379, 2013.
\newblock \url{http://eprint.iacr.org/}.

\bibitem[Ore87]{O87}
Yair Oren.
\newblock On the cunning power of cheating verifiers: Some observations about
  zero knowledge proofs.
\newblock In {\em 28th Annual IEEE Symposium on Foundations of Computer
  Science}, pages 462--471. IEEE Computer Society, 1987.

\bibitem[SW13]{SahaiW13}
Amit Sahai and Brent Waters.
\newblock How to use indistinguishability obfuscation: Deniable encryption, and
  more.
\newblock Cryptology ePrint Archive, Report 2013/454, 2013.
\newblock \url{http://eprint.iacr.org/}.

\end{thebibliography}
\end{document}